\colorlet{mylinkcolor}{violet}
\colorlet{mycitecolor}{YellowOrange}
\colorlet{myurlcolor}{Aquamarine}
\theoremstyle{plain}
\newtheorem{theorem}{Theorem}[section]
\newtheorem{proposition}[theorem]{Proposition}
\newtheorem{lemma}[theorem]{Lemma}
\theoremstyle{definition}
\newtheorem{definition}[theorem]{Definition}
\theoremstyle{remark}
\newtheorem{remark}[theorem]{Remark}
\DeclareMathOperator{\polylog}{polylog}
\DeclareMathOperator{\Ker}{Ker}
\DeclareMathOperator{\id}{id}
\DeclareMathOperator{\Hom}{Hom}
\DeclareMathOperator{\End}{End}
\DeclareMathOperator{\Ell}{Ell}
\DeclareMathOperator{\Cl}{Cl}
\DeclareMathOperator{\Ab}{Ab}
\DeclareMathOperator{\Mod}{Mod}
\DeclareMathOperator{\isom}{isom}
\DeclareMathOperator{\Mumf}{Mumf}
\DeclareMathOperator{\Mat}{Mat}
\newcommand{\F}{\mathbb{F}}
\newcommand{\Z}{\mathbb{Z}}
\newcommand{\C}{\mathbb{C}}
\newcommand{\Q}{\mathbb{Q}}
\newcommand{\pro}{\mathbb{P}}
\newcommand{\adv}{\mathcal{A}}
\newcommand{\scrO}{\mathscr{O}}
\newcommand{\leg}[2]{\left(\frac{#1}{#2}\right)}
\newcommand{\deq}{\mathrel{\mathop:}=}
\newcommand{\mapgen}{{\normalfont\textsf{MapGen}}}
\newcommand{\Setup}{{\normalfont\textsf{Setup}}}
\newcommand{\pp}{{\normalfont\textsf{pp}}}
\newcommand{\NP}{{\normalfont\textsf{NP}}}
\newcommand{\Constrain}{{\normalfont\textsf{Constrain}}}
\newcommand{\Eval}{{\normalfont\textsf{Eval}}}
\newcommand{\Enc}{{\normalfont\textsf{Enc}}}
\newcommand{\Dec}{{\normalfont\textsf{Dec}}}
\newcommand{\rgets}{\mathrel{\mathpalette\rgetscmd\relax}}
\newcommand{\rgetscmd}{\ooalign{$\leftarrow$\cr
    \hidewidth\raisebox{1.2\height}{\scalebox{0.5}{\ \rm R}}\hidewidth\cr}}
\newcommand{\defn}[1]{\textbf{#1}}
\title[Multiparty NIKE and More From Isogenies]{%
Multiparty Non-Interactive Key Exchange\\ and More From
Isogenies on Elliptic Curves}
\author[Boneh]{Dan Boneh$^1$}
\address{$^1$Stanford University}
\email{dabo@cs.stanford.edu}
\author[Glass]{Darren Glass$^2$}
\address{$^2$Gettysburg College}
\email{dglass@gettysburg.edu}
\author[Krashen]{Daniel Krashen$^3$}
\address{$^3$University of Georgia, Athens}
\email{daniel.krashen@gmail.com}
\author[Lauter]{Kristin Lauter$^4$}
\address{$^4$Microsoft Research}
\email{klauter@microsoft.com}
\author[Sharif]{Shahed Sharif$^5$}
\address{$^5$California State University San Marcos}
\email{ssharif@csusm.edu}
\author[Silverberg]{Alice Silverberg$^6$}
\address{$^6$University of California, Irvine}
\email{asilverb@uci.edu}
\author[Tibouchi]{Mehdi Tibouchi$^7$}
\address{$^7$NTT Corporation}
\email{tibouchi.mehdi@lab.ntt.co.jp}
\author[Zhandry]{Mark Zhandry$^8$}
\address{$^8$Princeton University}
\email{mzhandry@princeton.edu}
\newcommand{\ssparagraph}[1]{\subsection*{#1}}
\newcommand{\sssparagraph}[1]{\subsubsection*{#1}}
\date{}
\newcommand{\paperabstract}{%
We describe a framework for constructing an efficient non-interactive key
exchange (NIKE) protocol for $n$ parties for any $n \geq 2$.  Our approach is
based on the problem of computing isogenies between isogenous elliptic
curves, which is believed to be difficult. We do not obtain a working
protocol because of a missing step that is currently an open mathematical problem.
What we need to complete our protocol is an efficient algorithm that
takes as input an abelian variety presented as a product of isogenous
elliptic curves, and outputs an isomorphism invariant of the abelian
variety.

Our framework builds a {\em cryptographic invariant map}, which is a new
primitive closely related to a cryptographic multilinear map, but whose
range does not necessarily have a  group structure. Nevertheless, we show
that a cryptographic invariant map can be used to build several
cryptographic primitives, including NIKE, that were previously 
constructed from multilinear maps and indistinguishability obfuscation.}
\newcommand{\paperkeywords}{%
Multilinear maps, Non-Interactive Key Exchange, Isogenies,
Witness Encryption, Abelian Varieties}
\begin{document}
\begin{abstract}\paperabstract\end{abstract}
\keywords{\paperkeywords}
\subjclass[2010]{Primary 14K02; Secondary 14Q20, 11Y16, 94A60}

\maketitle

\section{Introduction}

Let $\F_q$ be a finite field, let $E$ be an ordinary elliptic curve over
$\F_q$, and let $X$ be the set of elliptic curves over $\F_q$ that are isogenous
to $E$. The set $X$ is almost always large (containing on the order of
$\sqrt{q}$ elements). Moreover, under suitable conditions on $E$, the set $X$ is
endowed with a free and transitive action $\ast$ by a certain abelian
group $G$, which is the ideal class group of the endomorphism
ring of $E$. The action $\ast$ maps a given $g \in G$ and $E \in X$ to a curve
$g \ast E \in X$.  

This action, originally defined by Deuring~\cite{Deuring}, has a number of properties that makes it useful in cryptography.
First, for a fixed curve $E \in X$, the map $G \to X$ defined by $g \mapsto g \ast E$ is believed to be a one-way function.
In other words, given a random curve $E' \in X$ it is difficult to find an element $g \in G$ 
such that $E' = g \ast E$.  This suggests a
Diffie--Hellman two-party key exchange protocol, proposed by
Couveignes~\cite{EPRINT:Couveignes06}
and Rostovtsev and Stolbunov~\cite{EPRINT:RosSto06}: Alice chooses a random $a \in G$
and publishes $E_a \deq a \ast E$;  Bob chooses a random $b \in G$ and publishes $E_b \deq b \ast E$.
Their shared key is the curve 
$$E_{ab} \deq (ab) \ast E = a \ast E_b = b \ast E_a,$$
which they can both compute.
To ensure that both parties obtain the same key, their shared key is the $j$-invariant of the curve $E_{ab}$.
More recently, De Feo, Jao, and Pl{\^u}t~\cite{FJP14}, Galbraith~\cite{Galbraith}, Castryck {et al.}~\cite{CSIDH},
and De Feo, Kieffer, and Smith~\cite{cryptoeprint:2018:485} proposed
variants of this protocol with better security and efficiency.  
Moreover, a supersingular version of the isogeny problem was introduced and proposed as the basis for a collision resistant hash function~\cite{JC:ChaLauGor09}. 
Security of this one-way function was further studied in~\cite{EC:EHLMP18}.

Second, as alluded to above, the star operator satisfies the following useful property:
for all $g_1,\ldots,g_n \in G$ the abelian varieties
\[
A_1 \deq (g_1 \ast E) \times \cdots \times (g_n \ast E) \quad\text{ and }\quad A_2 \deq (g_1\cdots g_n) \ast E \times E^{n-1}
\]
are isomorphic (see Appendix~\ref{sec:lowbrow}).
As we will see in the next section, this suggests an $n$-party non-interactive key
exchange protocol, as well as many other cryptographic constructions. 
This property leads to a more general cryptographic primitive that we call a \defn{cryptographic invariant map},
defined in the next section.   
This primitive has properties that are similar to those of cryptographic
multilinear maps~\cite{boneh2003applications,EC:GarGenHal13},
which have found numerous applications in cryptography
(e.g,~\cite{FOCS:GGHRSW13,STOC:GGSW13,AC:BonWat13,EC:BLRSZZ15}). 
We discuss applications of cryptographic invariant maps in Section~\ref{sec:applications}.
In Remark~\ref{rem:supersingrmk} we explain why we use ordinary and not supersingular elliptic curves.
Section~\ref{Sec:IsogMaps} describes our approach to constructing
cryptographic invariant maps from isogenies.

This work leads to the following question in algebraic geometry.

\ssparagraph{An open problem}
To make the cryptographic applications discussed above viable we must first overcome
an important technical challenge.  While the varieties $A_1$ and $A_2$ defined above are isomorphic,
they are presented differently.  Our applications require an efficient way to compute an invariant
that is the same for $A_1$ and $A_2$.
In addition, the invariant must distinguish non-isomorphic varieties.  
We do not know any such computable isomorphism invariant, and we present this as an open problem.
In Section~\ref{sec:failure-theta-null} we explain why some natural
proposals for isomorphism invariants do not seem to work.  
In Remarks~\ref{rem:DDH} and~\ref{rem:isogDDH} we show that a solution to
this open problem, even for $n=2$, would solve the isogeny decision
Diffie--Hellman problem.  
Further, we give evidence that computing a particular isomorphism invariant might be equivalent
to solving the elliptic curve isogeny problem, which is believed (or hoped)
to be a
quantum-resistant hard problem. Thus, Section~\ref{sec:failure-theta-null} might be useful from
the point of view of cryptanalysis of isogeny-based cryptography.

\section{Cryptographic invariant maps}
\label{sec:defs}

\begin{definition}\label{Def:EFTAction}
Let $X$ be a finite set and let $G$ be a finite abelian group. 
We say that \defn{$G$ acts efficiently on $X$ freely and transitively} if there is an efficiently
computable map $\ast: G \times X \to X$ such that:
\begin{itemize}
\item the map is a group action: $g \ast (h \ast x) = (gh) \ast x$, and there is an identity element $\id \in G$ such that $\id \ast x = x$, for all $x \in X$ and all $g,h \in G$;
\item the action is transitive: for every $(x,y) \in X\times X$ there is a $g \in G$ such that $g \ast x = y$; and
\item the action is free: if $x \in X$ and $g,h \in G$ satisfy $g \ast x = h \ast x$, then $g=h$.
\end{itemize}
\end{definition}

\begin{definition}\label{Def:CrInvMap}
By a \defn{cryptographic invariant map} we mean a randomized 
algorithm $\mapgen$ that inputs a 
security parameter $\lambda$,
outputs public parameters $\pp = (X,S,G,e)$, and runs in time polynomial in $\lambda$, where: 
\begin{itemize}
\item $X$ and $S$ are sets, and $X$ is finite, 
\item $G$ is a finite abelian group that acts efficiently on $X$ freely and transitively, 
\item $e$ is a deterministic algorithm that runs in time polynomial in $\lambda$ and $n$, such that for each $n>0$, algorithm $e$ takes $\lambda$ as input and
computes a map $e_n:X^n \to S$ that satisfies:
\begin{itemize}
\item \defn{Invariance property} of $e_n$:  for all $x \in X$ and $g_1,\ldots,g_n \in G$, 
\[   e_n(g_1 \ast x, \ldots, g_n \ast x) = 
                   e_n\big((g_1 \cdots g_n) \ast x, x, \ldots, x\big);  \]
\item \defn{Non-degeneracy} of $e_n$:  
for all $i$ with $1 \le i \le n$ and  $x_1,\ldots,x_{i-1},x_{i+1},\ldots,x_n \in X$, the map $X \to S$ defined by 
$y \mapsto e_n(x_1,\ldots,x_{i-1},y,x_{i+1},\ldots,x_n)$ is injective.
\end{itemize}
\end{itemize}
\end{definition}

In our candidate instantiation for cryptographic invariant maps the set $X$ is a set of isogenous elliptic curves and
the group $G$ acting on $X$ is a class group.
The elements of $S$ are isomorphism invariants of products of elliptic curves.   

Definition \ref{Def:CrInvMap} is quite ambitious in that it asks that $e_n$ be defined for all $n>0$ and run in polynomial time in $n$ (and $\lambda$).  
A cryptographic invariant map that is defined even for a single $n>2$, and satisfies the security assumptions in the next subsection, would still be quite interesting.  
We require a construction that works for all $n$ because our framework using elliptic curve isogenies seems to support it.
Similarly, we note that a construction that works for all $n>0$, but runs in time exponential in~$n$ is still useful. 
It would limit our ability to evaluate $e_n$ to relatively small $n$, but that is still of great interest.
In the first three proposals in Section~\ref{sec:failure-theta-null} we study candidates for $e_n$ that run in time exponential in $n$,
satisfy the non-degeneracy property, but do not satisfy the invariance property.  
It is an open problem to find a map that also satisfies the invariance property.

\ssparagraph{Security assumptions}

Next, we define some security assumptions on cryptographic invariant maps.
The notation $x \rgets X$ will denote an independent uniform random variable $x$ over the set $X$.
Similarly, we use $x' \rgets A(y)$ to define a random variable $x'$ that is the
output of a randomized algorithm $A$ on input $y$. 

The $n$-way computational Diffie--Hellman assumption states that, given only the public parameters and $(g_1 \ast x, \ldots, g_n \ast x) \in X^n$, it is difficult to compute
$e_{n-1}\big((g_1 \cdots g_n) \ast x, x, \ldots, x\big)$. 
A precise definition is the following:

\begin{definition}
We say that $\mapgen$ satisfies the $n$-way 
\defn{computational Diffie--Hellman assumption ($n$-CDH)}  
if for every polynomial time algorithm $\adv$,
\[  \Pr\Big[ \adv(\pp,\ g_1 \ast x, \ldots, g_n \ast x) = 
                e_{n-1}\big((g_1 \cdots g_n) \ast x, x, \ldots, x\big) \Big]  \]
is a negligible function of $\lambda$,
when $\pp \rgets \mapgen(\lambda)$, $g_1,\ldots,g_n \rgets G$, and $x \rgets X$. 
\end{definition}

\begin{remark}
\label{rem:DDH}
The natural  $n$-way decision Diffie--Hellman assumption on $X$ does not hold when invariant maps exist. 
That is, for all $n>0$ it is easy to distinguish $(g_1 \cdots g_n) \ast x \in X$ 
from a random element of $X$, given only $x,g_1 \ast x, \ldots, g_n \ast x$.  
Given a challenge $y \in X$, simply check if 
$$e_n(y,x,\ldots,x) = e_n(g_1 \ast x, \ldots, g_n \ast x).$$
Equality holds if and only if $y = (g_1 \cdots g_n) \ast x$. 
However, in Definition \ref{nDDHdefn} we define an $n$-way decision Diffie--Hellman assumption for $e_{n-1}$. 
It states that it is hard to distinguish 
$e_{n-1}\big((g_1 \cdots g_n) \ast x, x, \ldots, x\big)$ from a random element in the image of $e_{n-1}$, given only the public parameters, $x$, and $(g_1 \ast x, \ldots, g_n \ast x) \in X^n$.
\end{remark}

\begin{definition}
\label{nDDHdefn}
We say that $\mapgen$ satisfies the $n$-way 
\defn{decision Diffie--Hellman assumption ($n$-DDH)}  
if the following two distributions, $\mathcal{P}_0$ and $\mathcal{P}_1$, are polynomially indistinguishable,
when $\pp \rgets \mapgen(\lambda)$,  $g_1,\ldots,g_n \rgets G$, and $x \rgets X$:
\begin{itemize}
\item 
$\mathcal{P}_0$ is 
$(\pp,\ g_1 \ast x, \ldots, g_n \ast x,\ s_0)$
where $s_0 = e_{n-1}\big((g_1 \cdots g_n) \ast x, x, \ldots, x\big)$, 

\item 
$\mathcal{P}_1$ is 
$(\pp,\ g_1 \ast x, \ldots, g_n \ast x,\ s_1)$
where $s_1$ is random in 
$\text{Im}(e_{n-1}) \subseteq S$.
\end{itemize}
\end{definition}

\section{Applications}
\label{sec:applications}

We show that suitable cryptographic invariant maps can be used to solve a number of important problems in cryptography.

\ssparagraph{$n$-way Non-Interactive Key Exchange (NIKE)}
We show how to use a cryptographic invariant map to construct a Non-Interactive Key Exchange (NIKE) protocol in which $n$ parties create a shared secret key that only they can efficiently calculate, without any interaction among the $n$ parties.
Currently, secure $n$-party NIKE for $n>3$ is only known from general purpose indistinguishability obfuscation (e.g.,~\cite{boneh2017multiparty}).
Our NIKE construction is similar to the one
in~\cite{JC:Joux04,boneh2003applications,EC:GarGenHal13} and satisfies a ``static'' notion of security.
\begin{itemize}
\item $\text{Setup}(\lambda)$:  run $(X,S,G,e) \rgets \mapgen(\lambda)$ and choose $x \rgets X$.
Output $\pp \deq (X,S,G,e,x)$. 

\item For $i=1,\ldots,n$, party~$i$ chooses a random $g_i \rgets G$, computes $x_i \deq g_i \ast x \in X$, and publishes $x_i$ on a public bulletin board.

\item The shared key between the $n$-parties is 
$$k \deq e_{n-1}\big((g_1 \cdots g_n) \ast x, x, \ldots, x\big) \in S.$$
Party $i\in\{ 1,\ldots,n\}$ computes $k$ by obtaining $x_1,\ldots,x_n$ from the bulletin board,
then choosing some $j \in \{1,\ldots,n\}$ where $j \neq i$,
and computing $$k = e_{n-1}(x_1,\ldots,x_{j-1},\ g_i \ast x_j,\ x_{j+1},\ldots,x_n) \in S,$$
where $x_i$ is omitted from the input to $e_{n-1}$.
\end{itemize}

All $n$ parties obtain the same key $k$ by the invariance property of $e_{n-1}$.
Static security 
follows from the $n$-way decision
Diffie--Hellman assumption, as in~\cite{boneh2003applications}.   
Alternatively, we can rely on the weaker $n$-way computational Diffie--Hellman assumption by 
applying a hash function $H:S \to K$ to the 
key $k$.  We model $H$ as a random oracle in the security analysis.
We leave the question of an adaptively-secure NIKE, in the sense
of~\cite{PKC:FHKP13,EPRINT:Rao14}, from an invariant map for future work.

\ssparagraph{Unique signatures and verifiable random functions (VRF)}
A digital signature scheme is made up of three algorithms: a key generation algorithm that outputs a
public key and a secret key, a signing algorithm that signs a given message using the secret key, and
a verification algorithm that verifies a signature on a given message using the public key.
A signature scheme is a \defn{unique signature scheme} if for every public key and every message,
there is at most one signature that will be accepted as a valid signature for that message under the public key.
While a number of unique signature schemes are known in the random oracle
model (e.g.,~\cite{EC:BelRog96,AC:BonLynSha01}), it is quite hard to 
construct unique signatures without random
oracles~\cite{C:Lysyanskaya02,PKC:DodYam05}. 
Unique signatures are closely related to a simpler object called a
verifiable random function, or VRF~\cite{FOCS:MicRabVad99}.
Previous results show how to construct unique signatures and VRFs from multilinear maps
without random oracles~\cite{boneh2003applications}.
The same constructions work with a cryptographic invariant map.
The unique signature scheme works as follows:
The secret key is a random $(g_{1,0},g_{1,1},\ldots,g_{n,0},g_{n,1}) \rgets G^{2n}$.
The public key is $(x,y_{1,0},\ldots,y_{n,1}) \in X^{2n+1}$ where $x \rgets X$ and $y_{i,b} \deq g_{i,b} \ast x$ for $i=1,\ldots,n$ and $b=0,1$.
The signature on an $n$-bit message $m \in \{0,1\}^n$ is $\sigma \deq (\prod_{i=1}^n g_{i,m_i}) \ast x \in X$.  
To verify a signature~$\sigma$, check that $e_n(\sigma, x,\ldots,x) = e_n\big(y_{1,m_1}, \ldots, y_{n,m_n}\big)$.
The security analysis of this construction is the same as in~\cite{boneh2003applications}.

\ssparagraph{Constrained PRFs and broadcast encryption} We next describe
how to construct \emph{constrained pseudorandom
functions}~\cite{AC:BonWat13,CCS:KPTZ13,PKC:BoyGolIva14} for
\emph{bit-fixing constraints} from a cryptographic invariant map.  Such
constrained PRFs in turn can be used to build broadcast encryption with
short ciphertexts~\cite{AC:BonWat13}.

A pseudorandom function (PRF) is a function $F:\mathcal{K}\times\mathcal{A}\rightarrow\mathcal{B}$ that is computable in polynomial time.  Here, $\mathcal{K}$ is the key space, $\mathcal{A}$ is the domain, and $\mathcal{B}$ is the codomain.  Intuitively, PRF security requires that, for a random key $k \in \mathcal{K}$, an adversary who obtains pairs $\big(a,\ F(k,a)\big)$, for $a \in \mathcal{A}$ of its choice, cannot distinguish these pairs from pairs $\big(a, f(a)\big)$ where $f$ is a random function $\mathcal{A} \to \mathcal{B}$.

A \emph{bit-fixing constrained PRF} is a PRF where a key $k \in \mathcal{K}$
can be constrained to only evaluate the PRF on a subset of the domain $\mathcal{A}$, where $\mathcal{A} = \{0,1\}^n$. 
Specifically, for $V \subseteq [n] = \{1,\ldots,n\}$ and a function $v: V \to \{0,1\}$, let $\mathcal{A}_v = \{a\in\mathcal{A}:\forall i \in V, a_i=v(i) \}$.
A constrained key $k_v$ enables one to evaluate $F(k,a)$ for all $a \in \mathcal{A}_v$, but reveals nothing about $F(k,a)$ for $a \notin \mathcal{A}_v$.
We refer to~\cite{AC:BonWat13} for the complete definition of this concept, 
and its many applications.

We now explain how to construct bit-fixing constrained PRFs from
cryptographic invariant maps.  The construction and security proof are
essentially the same as in Boneh and Waters~\cite{AC:BonWat13}, but translated to our setting.  One complication is that the construction of Boneh and Waters requires a way to operate on invariants in $S$.  We get around this by delaying the evaluation of the invariant to the very last step. We thus obtain the following bit-fixing constrained PRF:

\begin{itemize}
\item $\Setup(\lambda)$: 
run $(X,S,G,e)\rgets\mapgen(\lambda)$ and choose $x\rgets X$.  \\
Next choose $\alpha\rgets G$ and $d_{i,b}\rgets G$ for $i\in[n]$ and $b\in\{0,1\}$.  \\
Output the key $k=(X,S,G,e,\alpha,\{d_{i,b}\}_{i,b})$.
    
\item The PRF is defined as: 
$F(k,a)=e_n\big(\;(\alpha\times\prod_{i=1}^n d_{i,a_i})\ast x,\ x,\ \dots,\  x\big)$.   \\
Here, $a \in \{0,1\}^n$ specifies a subset product of the set of $d_{i,b}$'s.
    
\item $\Constrain(k,v)$: Let $V \subseteq [n]$ be the support of the function $v$, and assume $V$ is not empty. 
The constrained key $k_v$ is constructed as follows. Set $D_{i,b}=d_{i,b}\ast x$ for $i\notin V$.  
Let $i_0$ be the smallest element of $V$.  Choose $|V|-1$ random $g_i \in G$ for $i\in V\setminus \{i_0\}$, and set $g_{i_0}=\alpha\times\prod_{i\in V} d_{i,v_i}\times(\prod_{i\in V\setminus \{i_0\}} g_i)^{-1} \in G$.
Let $h_i = g_i\ast x$ for $i\in V$.   \\
The constrained key is $k_v=\big(\{D_{i,b}\}_{i\notin V,b\in\{0,1\}},\ \{h_i\}_{i\in V}\big)$.
    
\item $\Eval(k_v,a)$: To evaluate $F(k,a)$ using the constrained key $k_v$ do the following.
If $a\notin \mathcal{A}_v$, output $\diamond$.  Otherwise, for $i=1,\ldots,n,$ let 
$C_i= D_{i,a_i}$ if $i\notin V$, and let $C_i= h_i$ otherwise.
Output  $e_n(C_1,\dots,C_n)$.    
Then, by construction,
\[ e_n(C_1,\dots,C_n) = e_n\left(\left(\prod_{i\notin V} d_{i,a_i}\prod_{i\in V}g_i\right)\ast x,x,\dots,x\right) = F(k,a), \]
as required.
\end{itemize}

The security proof for this construction is as in~\cite{AC:BonWat13}.
This construction can be further extended to a verifiable random function
(VRF) by adapting Fuchsbauer~\cite{SCN:Fuchsbauer14} in a similar fashion.

\ssparagraph{Witness encryption}  Witness encryption, due to Garg et
al.~\cite{STOC:GGSW13}, can be used to construct Identity-Based
Encryption, Attribute-Based Encryption, broadcast
encryption~\cite{TCC:Zhandry16}, and secret sharing for $\NP$ statements.  
Witness encryption is a form of encryption where a public key is simply an $\NP$ statement, and a secret key is a witness for that statement.  
More precisely, a witness encryption scheme is a pair of algorithms:
\begin{itemize}
	\item $\Enc(x,m)$ is a randomized polynomial-time algorithm that takes as input an $\NP$ statement $x$ and a message $m$, and outputs a ciphertext $c$;
    \item $\Dec(x,w,c)$ is a deterministic polynomial-time algorithm that takes as input a statement $x$, supposed witness $w$, and ciphertext $c$, and attempts to produce the message $m$.
\end{itemize}
We require that if $w$ is a valid witness for $x$, then for any message $m$, if $c\rgets\Enc(x,m)$, then $\Dec(x,w,c)$ outputs $m$  with probability 1.

The basic notion of security for witness encryption is \emph{soundness
security}, which requires that if $x$ is false, then $\Enc(x,m)$ hides
all information about $m$.  A stronger notion called \emph{extractable
security}, due to Goldwasser et al.~\cite{C:GKPVZ13}, requires, informally, that if one can learn any information about $m$ from $\Enc(x,m)$, then it must be the case that one ``knows'' a witness for $x$.

We briefly describe how to construct witness encryption from invariant maps.  
It suffices
to give a construction from any $\NP$-complete problem.  There are at
least two natural constructions from multilinear maps that we can use.
One approach is to adapt the original witness encryption scheme of Garg
et al.~\cite{STOC:GGSW13} based on the Exact Cover problem.  This approach
unfortunately also requires the same graded structure as needed by Boneh
and Waters~\cite{AC:BonWat13}.  However, we can apply the same ideas as
in our constrained PRF construction to get their scheme to work with
invariant maps.  Another is the scheme of Zhandry~\cite{TCC:Zhandry16} based on Subset Sum.\footnote{The basic scheme shown by Zhandry requires an ``asymmetric'' multilinear map, where the inputs to the map come from different sets.  However, he also explains how to instantiate the scheme using symmetric multilinear maps.  The symmetric scheme easily translates to use invariant maps.}  

As with the constructions of Garg et al.{} and Zhandry, the security of these constructions can be justified in an idealized attack model for the cryptographic invariant map, allowing only the operations explicitly allowed by the map---namely the group action and the map operation.  Justification in idealized models is not a proof, but provides heuristic evidence for security.

\section{Cryptographic invariant maps from isogenies} \label{Sec:IsogMaps}

We begin by recalling some facts that are presented in more detail in
Appendix~\ref{sec:isog}.
Let $E$ be an ordinary elliptic curve over a finite field $\F_q$ such
that the ring $\Z[\pi]$ generated by its Frobenius endomorphism $\pi$ is
integrally closed. This implies in particular that $\Z[\pi]$ is the full
endomorphism ring $\scrO$ of $E$. Let $\Cl(\scrO)$ denote the ideal class
group of this ring, and let $\Ell(\scrO)$ denote the isogeny class of $E$. There
exists a free and transitive action $\ast$ of $\Cl(\scrO)$ on
$\Ell(\scrO)$, and there is a way to represent elements of
$\Cl(\scrO)$ (namely, as products of prime ideals of small norm) that
makes this action efficiently computable. Moreover, one can 
efficiently sample close to uniform elements in $\Cl(\scrO)$ under that
representation. In addition, the ``star operator'' $\ast$ satisfies the
following property: for any choice of ideal classes
$\mathfrak{a}_1,\dots,\mathfrak{a}_n,\mathfrak{a}'_1,\dots,\mathfrak{a}'_n$
in $\Cl(\scrO)$, the abelian varieties
\begin{equation}
\label{eq:prodabvarisomcond}
(\mathfrak{a}_1\ast E)\times\cdots\times(\mathfrak{a}_n\ast E)
   \quad\textrm{and}\quad
   (\mathfrak{a}'_1\ast E)\times\cdots\times(\mathfrak{a}'_n\ast E)
\end{equation}
are isomorphic over $\F_q$ if and only if
$\mathfrak{a}_1\cdots\mathfrak{a}_n =
\mathfrak{a}'_1\cdots\mathfrak{a}'_n$ in $\Cl(\scrO)$. In particular:
\begin{equation}
\label{eq:prodabvar}
(\mathfrak{a}_1\ast E)\times\cdots\times(\mathfrak{a}_n\ast E) \cong
(\mathfrak{a}_1\cdots\mathfrak{a}_n)\ast E\times E^{n-1}.
\end{equation}

Denote by $\Ab(E)$ the set of abelian varieties over $\F_q$ 
that are a product 
of the form~\eqref{eq:prodabvar}, and assume that we can
efficiently compute an isomorphism invariant for
abelian varieties in $\Ab(E)$. In other words, assume that we have an
efficiently computable map $\isom\colon \Ab(E)\to S$ to some set $S$ that to
any tuple $E_1,\dots,E_n$ of elliptic curves isogenous to $E$ associates
an element $\isom(E_1\times\cdots\times E_n)$ of $S$ such that 
$\isom(E_1\times\cdots\times E_n) = \isom(E'_1\times\cdots\times E'_n)$
if and only if the products $E_1\times\cdots\times E_n$ and
$E'_1\times\cdots\times E'_n$ are isomorphic as abelian varieties. 
The curves $E_i$ are given for example by their $j$-invariants, and in particular,
the ideal classes $\mathfrak{a}_i$ such that $E_i \cong \mathfrak{a}_i \ast  
E$ are not supposed to be known.

Based on such an isomorphism invariant $\isom$, we
construct a cryptographic invariant map as follows. The algorithm
$\mapgen(\lambda)$ computes a sufficiently large base field $\F_q$, and an
elliptic curve $E$ over $\F_q$ such that the ring $\Z[\pi]$ generated by its
Frobenius endomorphism is integrally closed (this can be done
efficiently: see again Appendix~\ref{sec:isog}).
The algorithm then outputs the public parameters $\pp = (X,S,G,e)$ where:
\begin{itemize}
\item $X = \Ell(\scrO)$ is the isogeny class of $E$ over $\F_q$;
\item $S$ is the codomain of the isomorphism invariant $\isom$;
\item $G = \Cl(\scrO)$ is the ideal class group of $\scrO$; and
\item the map $e_n\colon X^n\to S$ is given by
$e_n(E_1,\dots,E_n) = \isom(E_1\times\cdots\times E_n)$.
\end{itemize}

The facts recalled at the beginning of this section show that $G$ acts
efficiently on $X$ freely and transitively in the sense of
Definition \ref{Def:EFTAction}, and that the properties of Definition \ref{Def:CrInvMap} are
satisfied. In particular, the invariance property follows
from~\eqref{eq:prodabvar}, and the non-degeneracy from the fact that the
abelian varieties in~\eqref{eq:prodabvarisomcond} are isomorphic
\emph{only if} the corresponding products of ideal classes coincide. Thus,
this approach does provide a cryptographic invariant map assuming $\isom$
exists.

\begin{remark}
\label{rem:nike2rmk}
In the $2$-party case, the NIKE protocol obtained from this construction 
coincides with the isogeny key exchange protocols over ordinary curves described by
Couveignes~\cite{EPRINT:Couveignes06} and
Rostovtsev--Stolbunov~\cite{EPRINT:RosSto06}. 
\end{remark}

\begin{remark}
\label{rem:isogDDH}
The existence of $\isom$ breaks the isogeny decision Diffie--Hellman
problem. Indeed, given three elliptic curves $(\mathfrak{a} \ast
E,\mathfrak{b} \ast E, \mathfrak{c} \ast E)$ isogenous to $E$, one can
check whether $\mathfrak{c}=\mathfrak{a}\mathfrak{b}$ in $\Cl(\scrO)$ by
testing whether the surfaces 
$$(\mathfrak{a} \ast E) \times (\mathfrak{b}
\ast E) \quad {\text{ and}} \quad (\mathfrak{c} \ast E) \times E$$ 
are isomorphic. This does
not prevent the construction of secure NIKE protocols (as those can be
based on the computational isogeny Diffie--Hellman problem by applying a
hash function: see Section~\ref{sec:applications}), but currently, no efficient algorithm is
known for this isogeny decision Diffie--Hellman problem. 
\end{remark}

\begin{remark}
\label{rem:nikerrmk}
For certain applications, it would be interesting to be able to hash
to the set $X = \Ell(\scrO)$, i.e., construct a random-looking curve $E'$ in the
isogeny class of $E$ without knowing an isogeny walk from $E$ to $E'$.
An equivalent problem is to construct a random-looking elliptic curve
with exactly $\#E(\F_q)$ points over $\F_q$. This seems difficult, however;
the normal way of doing so involves the CM method, which is not efficient
when the discriminant is large. 
\end{remark}

\begin{remark}
\label{rem:supersingrmk}
One can ask whether this construction extends to the supersingular
case. Over $\F_{p^2}$ with $p$ prime, the answer is clearly no, as the isogeny class of a supersingular elliptic 
curve is not endowed with a natural free and transitive group action by an abelian
group. More importantly, isomorphism classes of products of isogenous supersingular elliptic curves over $\F_q$  are
essentially trivial at least in a geometric sense. Indeed, according to a result of Deligne (see \cite[Theorem 3.5]{Shioda78}), if $E_1,\dots,E_n,E'_1,\dots,E'_n$
are all isogenous to a supersingular elliptic curve $E$, then
\[ E_1\times\cdots\times E_n \cong E'_1\times\cdots\times E'_n \quad \text{over $\overline{\F_q}$} \]
as soon as $n\geq 2$. In fact, the result holds over any extension of the base field over which all the endomorphisms of $E$ are defined, so already over $\F_{p^2}$.
However, for a supersingular elliptic curve $E$ over a prime field $\F_p$, the number of $\F_p$-iso\-mor\-phism classes of products
$E_1\times\cdots\times E_n$ with all $E_i$ isogenous to $E$ can be large. For example, this is shown
when $n=2$ in \cite[Section 5]{XYY16}. Therefore, one could conceivably obtain a 
``commutative supersingular'' version of the construction above, which would generalize the recent
$2$-party key exchange protocol CSIDH~\cite{CSIDH}, assuming that $\F_p$-isomorphism  invariants
can be computed in that setting. Since those invariants must be arithmetic rather than geometric in
nature, however, this seems even more difficult to achieve than in the ordinary case. 
\end{remark}

\section{Some natural candidate cryptographic invariant maps}
\label{sec:failure-theta-null}

In order to instantiate a cryptosystem based on the ideas in this paper, it remains to find an efficiently computable map $\isom\colon \Ab(E)\to S$ for some set $S$, as in the previous section. Below we give evidence that several natural candidates fail, either because efficiently computing them would break the cryptographic security, or because they are not in fact isomorphism invariants. 

Our primary roadblock is that while $E_1 \times \cdots \times E_n$ and $E_1' \times \cdots \times E_n'$ can be isomorphic as unpolarized abelian varieties, they are not necessarily isomorphic as polarized abelian varieties with their product polarizations.  
The first three proposals below for invariants are invariants of the isomorphism class as polarized abelian varieties, but are not invariants of the isomorphism class as unpolarized abelian varieties. We do not know a way for the different parties to choose polarizations on their product varieties in a compatible way, to produce the same invariant, without solving the elliptic curve isogeny problem.

At present, we do not know an invariant of abelian varieties in dimension $\geq 2$ that does not require choosing a polarization, with the exception of what we call the ``Deligne invariant'', described 
below.

\ssparagraph{The theta null invariant} 
One natural candidate is given by Mumford's \emph{theta nulls}, presented
in detail in Appendix~\ref{sec:algebr-theta-funct}. Unfortunately, in order to compute even a
single theta null, one must first choose a principal polarization, and
the resulting invariant does depend on this choice of polarization in a
crucial way. 
In Proposition~\ref{prop:failure-theta-nulls} below we show that, as a result, the
theta nulls do not in fact provide an isomorphism invariant as unpolarized abelian varieties.
 
\ssparagraph{Igusa invariants} 
Suppose $n = 2$ and $\End E \otimes \Q \cong \Q(\sqrt{-d})$ with $d \in \mathbb{N}$ square-free. If $d \neq 1,3,7,15$, then for $E_1$ and $E_2$ in the isogeny class of $E$, the product $E_1 \times E_2$ is the Jacobian of a genus $2$ curve $C$ (see~\cite{HayashidaNishi}). It is possible to compute such a genus $2$ curve $C$, given a suitable principal polarization on $E_1 \times E_2$. For each such $C$, one could then compute the Igusa invariants \cite{Igusa60} of $C$.
The number of genus $2$ curves $C$ such that $E_1 \times E_2$ is isomorphic to the Jacobian variety of $C$ is large (\cite{Hayashida} and \cite[Theorem 5.1]{Lange}), and unfortunately the Igusa invariants are different for different choices of $C$.  There are many principal polarizations on each element of $\Ab(E)$, and no compatible way for the different parties to choose the same one.

\ssparagraph{Invariants of Kummer surfaces} 
When $n = 2$, another approach is to consider the Kummer surface of $A = E_1 \times E_2$, which is the quotient $K = A/\{\pm1\}$. The surface $K$ itself does not depend on a polarization. But extracting an invariant from $K$, for example as in \cite[Chapter 3]{Prolegomena}, \emph{does} depend on having a projective embedding of $K$.

\ssparagraph{Deligne invariant} 
A natural candidate is an isomorphism invariant studied by Deligne \cite{Deligne69}. Suppose $A$ is an ordinary abelian variety over $k=\F_q$. 
The Serre-Tate canonical lift of $A$ to characteristic 0 produces an abelian variety over the ring of Witt vectors $W(\bar{k})$.
Fixing an embedding $\alpha$ of $W(\bar{k})$ into $\C$, we can view this lift as a complex abelian variety ${A}^{(\alpha)}$. Let $T_\alpha(A)$ denote the first integral homology group of ${A}^{(\alpha)}$. 
The Frobenius endomorphism $F$ of $A$ also lifts to characteristic 0 and defines an action of $F$ on $T_\alpha(A)$
.  
The theorem in~\cite[\S7]{Deligne69} shows that ordinary abelian varieties  $A$ and $B$ over $\F_q$ are isomorphic if and only if there is an isomorphism $T_\alpha(A) \to T_\alpha(B)$ that respects the action of $F$.

A natural candidate for a cryptographic invariant map is the map that
sends $(E_1,\ldots,E_n)$ to the isomorphism invariant 
$$T_\alpha(E_1 \times\cdots
\times E_n)=T_\alpha(E_1)\oplus \cdots\oplus T_\alpha(E_n).$$ 
Specifying the isomorphism class of $T_\alpha(E_1 \times \cdots \times E_n)$ as a $\Z[F]$-module is equivalent to specifying the action of $F$ as a $2n \times 2n$ integer matrix, unique up to conjugacy over $\mathbb{Z}$.
However, we show in Theorem \ref{Delinvthm} below that 
being able to compute $T_\alpha(E)$ for an elliptic curve $E$ in polynomial time would yield a polynomial-time algorithm to solve the elliptic curve isogeny problem of recovering $\mathfrak{a}$ given $E$ and $\mathfrak{a}\ast E$, and conversely. 

\begin{theorem}
\label{Delinvthm}
An efficient algorithm to compute Deligne invariants $T_\alpha(E)$ on an isogeny class of ordinary elliptic curves over a finite field $k$ gives an efficient algorithm to solve the elliptic curve isogeny problem in that isogeny class. Conversely, an efficient algorithm to solve the elliptic curve isogeny problem on an isogeny class of ordinary elliptic curves over $k$ yields an efficient algorithm to compute, for some embedding $\alpha:W(\bar{k}) \hookrightarrow \C$, the Deligne invariants $T_\alpha(E)$ on the isogeny class.
\end{theorem}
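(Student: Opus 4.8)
The plan is to set up a dictionary between the Deligne invariant $T_\alpha(E)$ of an ordinary elliptic curve and the ideal-class data governing the action $\ast$, and then run the reduction in both directions through this dictionary. Recall that for an ordinary elliptic curve $E/\F_q$ with Frobenius $\pi$, the module $T_\alpha(E)$ is a rank-$2$ $\Z$-module equipped with the action of $F$, which lifts $\pi$; concretely $T_\alpha(E)$ is a proper fractional $\scrO$-ideal $\mathfrak{b}$ (where $\scrO = \Z[\pi]$), with $F$ acting as multiplication by $\pi$ under the embedding $\alpha$. The isomorphism class of the $\Z[F]$-module $T_\alpha(E)$ is exactly the ideal class $[\mathfrak{b}] \in \Cl(\scrO)$, and Deligne's theorem specializes to the classical statement (Deuring) that $E \mapsto [\mathfrak{b}]$ identifies $\Ell(\scrO)$ with a $\Cl(\scrO)$-torsor, with $\mathfrak{a} \ast E$ corresponding to $[\mathfrak{a}^{-1}\mathfrak{b}]$ (or $[\mathfrak{a}\mathfrak{b}]$, depending on normalization). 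So ``computing $T_\alpha(E)$'' means: given the $j$-invariant of $E$, produce an integer matrix for $F$ on a rank-$2$ lattice, equivalently produce a fractional ideal representing the class of $E$ relative to some implicit base point.

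For the forward direction, suppose we have an efficient algorithm computing $T_\alpha$. Given $E$ and $E' = \mathfrak{a}\ast E$, run the algorithm on both to get matrices $M, M'$ for $F$, i.e.\ fractional ideals $\mathfrak{b}, \mathfrak{b}'$ in the same $\scrO$ with $[\mathfrak{b}'] = [\mathfrak{a}^{-1}][\mathfrak{b}]$. Then compute $[\mathfrak{b}][\mathfrak{b}']^{-1}$ in $\Cl(\scrO)$: this is $[\mathfrak{a}]$. Concretely, from the two integer matrices one reads off the lattices inside $\scrO\otimes\Q = \Q(\pi)$ (the matrix conjugacy class over $\Z$ pins down the ideal class), forms the quotient ideal, and then re-expresses it in the required ``product of prime ideals of small norm'' representation using a subexponential or (since $\Cl(\scrO)$ is computable) at worst polynomial-in-$\log q$-given-a-class-group-oracle procedure — and one should note that reducing a class to smooth-ideal form is itself the standard step used to make $\ast$ computable, so it is already available in our setting. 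This recovers $\mathfrak{a}$, solving the isogeny problem.

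For the converse, suppose we can solve the isogeny problem efficiently. First fix a base curve $E_0$ in the isogeny class together with an explicit fractional ideal $\mathfrak{b}_0$ representing $T_\alpha(E_0)$ — here one must actually pin down the embedding $\alpha$. The point is that we are free to choose $\alpha$: choose it so that the canonical lift of $E_0$ has period lattice equal to a fixed proper $\scrO$-ideal $\mathfrak{b}_0$ of our choosing (the set of embeddings $W(\bar k)\hookrightarrow\C$ acts on the possible identifications, so some $\alpha$ realizes any fixed normalization). Then, given an arbitrary $E$ in the isogeny class, call the isogeny-solver on input $(E_0, E)$ to obtain $\mathfrak{a}$ with $E = \mathfrak{a}\ast E_0$; output the integer matrix of $\pi$ acting on the ideal $\mathfrak{a}^{-1}\mathfrak{b}_0$ (or $\mathfrak{a}\mathfrak{b}_0$). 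By Deligne's theorem and the torsor structure this is exactly $T_\alpha(E)$ for the $\alpha$ we fixed, so we have computed the Deligne invariant.

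The main obstacle I expect is the bookkeeping around the embedding $\alpha$ and the normalization of $T_\alpha$: Deligne's functor is only canonical once $\alpha$ is fixed, and the theorem statement itself is careful to say ``for some embedding $\alpha$'' in the converse but ranges over the isogeny class uniformly in the forward direction — so one has to check that a single choice of $\alpha$ works coherently for all curves in the class, which follows because the $\Cl(\scrO)$-action on $\Ell(\scrO)$ is compatible with the $\Cl(\scrO)$-action on ideal classes independently of $\alpha$ (changing $\alpha$ only changes the base point $\mathfrak{b}_0$, i.e.\ translates the torsor identification by a fixed class). A secondary technical point is making the lattice-to-smooth-ideal conversions genuinely polynomial time; this is where one leans on the same ideal-representation machinery recalled in Section~\ref{Sec:IsogMaps} and Appendix~\ref{sec:isog}. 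Everything else — reading an ideal class off an integer matrix for $\pi$, multiplying and inverting in $\Cl(\scrO)$ — is routine linear algebra over $\Z$ together with standard class-group arithmetic.
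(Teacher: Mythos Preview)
Your proposal is correct and follows essentially the same route as the paper. In both directions the paper does exactly what you sketch: in the forward direction it reads off from each matrix of $F$ a fractional $\scrO$-ideal (explicitly, the ideal generated by $1$ and $(t-a_i)/b_i$ where $F(u_i)=a_iu_i+b_iv_i$) and outputs the quotient class $\mathfrak{a}_1\mathfrak{a}_2^{-1}$; in the converse it fixes $E_0$, chooses $\alpha$ so that $T_\alpha(E_0)\cong R=\Z[t]/(f)$, and then for any $E$ calls the isogeny solver to get $\mathfrak{a}$ with $E_0\cong\mathfrak{a}\ast E$ and outputs $\mathfrak{a}$ itself as $T_\alpha(E)$. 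The only places the paper adds substance beyond your outline are the concrete ideal-from-matrix formula just mentioned, a verification via Deligne's full-faithfulness that the resulting $\mathfrak{a}$ really satisfies $\mathfrak{a}\ast E_1\cong E_2$, and an explicit CM-theoretic argument (find $E'/\C$ with $H_1\cong R$, then twist an arbitrary embedding $\beta$ by a suitable $\sigma\in\mathrm{Gal}(\C/\Q)$) to produce the required $\alpha$---which is precisely the ``bookkeeping around $\alpha$'' you flagged as the main obstacle.
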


\begin{proof}
Suppose that $E_1$ and $E_2$ are in the isogeny class, and suppose that for $i=1,2$ we have a $\Z$-basis $\{ u_i, v_i\}$ for $T_\alpha(E_i)$ and a $2 \times 2$ integer matrix giving the action of $F$ with respect to this basis. We will efficiently compute a fractional ideal  $\mathfrak{a}$ such that $\mathfrak{a} \ast E_1 \cong E_2$.

Let $f(t)$ be the characteristic polynomial of Frobenius acting on $E_1$ or $E_2$; these are the same since $E_1$ and $E_2$ are isogenous. Let $R = \Z[t]/(f)$ and $R_\Q = R \otimes_\Z \Q$. Then $T_\alpha(E_i)$ is a rank one $R$-module, with $t$ acting as $F$. 
Compute  $a_i, b_i\in\Z$ such that $F(u_i) = a_iu_i + b_iv_i$.
 Let $\mathfrak{a}_i$ be the fractional $R$-ideal generated by
  $1$ and $(t - a_i)/b_i$.
 Compute and output $\mathfrak{a} = \mathfrak{a}_1 \mathfrak{a}_2^{-1}$.
 
We claim that $\mathfrak{a} \ast E_1 \cong E_2$. 
Define $\lambda_i: T_\alpha(E_i) \hookrightarrow R_\Q$ by sending  $w\in T_\alpha(E_i)$ to the unique  $\lambda_i(w)\in R_\Q$ such that $\lambda_i(w) \cdot u_i = w$.
Then  $\lambda_i(u_i)=1$ and $\lambda_i(v_i) = (t - a_i)/b_i$,
so the fractional ideal $\mathfrak{a}_i$ is the image of the map $\lambda_i$. 
Suppose $M$ is a positive integer such that $M\mathfrak{a}$ is an integral ideal of $R$, and 
let $h=\lambda_2^{-1} \circ M\lambda_1$.
Then $h(T_\alpha(E_1))$ is an $R$-submodule of $T_\alpha(E_2)$. 
By \cite[\S7]{Deligne69}, the map $E \mapsto T_\alpha(E)$ is a fully faithful functor, i.e., it induces a bijection
\[
  \Hom_k(E_1,E_2) \to \Hom_R(T_\alpha(E_1), T_\alpha(E_2)).
\]
Thus $h$ arises from a unique isogeny $\phi:E_1 \to E_2$. By~\cite[\S4]{Deligne69}, the kernel of $\phi$ 
is isomorphic as an $R$-module to $T_\alpha(E_2)/h(T_\alpha(E_1))$. The latter $R$-module is isomorphic to $R/M\mathfrak{a}$, and hence is exactly annihilated by $M\mathfrak{a}$. Thus $\ker(\phi) \cong E_1[M\mathfrak{a}]$, so
  $E_2 \cong E_1/E_1[M\mathfrak{a}] \cong (M\mathfrak{a}) \ast E_1.$
Since $M\mathfrak{a}$ and  $\mathfrak{a}$ are in the same ideal class, we have $E_2 \cong \mathfrak{a} \ast E_1$, as desired.
Fractional ideals can be inverted in polynomial time by \cite[Algorithm 5.3]{belabas} or \cite[\S 4.8.4]{Cohen} (see \cite[p. 21]{belabas} for the complexity).

Conversely, suppose we have an algorithm that efficiently solves the isogeny problem in the isogeny class of an ordinary elliptic curve $E_0$.
Take $R$ as above. We show below that there exists an embedding $\alpha:W(\bar{k}) \hookrightarrow \C$ such that $T_\alpha(E_0) \cong R$. 
Given $E$ isogenous to $E_0$, use the isogeny problem algorithm to compute $\mathfrak{a}$ such that $E_0 \cong \mathfrak{a} \ast E$. Output $T_\alpha(E) = \mathfrak{a}$.

It remains to show that an embedding $\alpha:W(\bar{k}) \hookrightarrow \C$ exists such that $T_\alpha(E_0) \cong R$ and $T_\alpha(E) = \mathfrak{a}$. We follow an argument in the proof of~\cite[Theorem 2.1]{Duke-Toth}. 
There exists an elliptic curve $E'$ over $\C$ with CM by $R$ for which $H_1(E',\mathbb{Z}) \cong R$ as $R$-modules. Take any embedding $\beta: W(\bar{k}) \hookrightarrow \C$. Then the complex elliptic curve  $E^{(\beta)}$ has CM by $R$, and by the theory of complex multiplication there exists $\sigma \in \mathrm{Gal}(\C/\Q)$ such that 
$E' = \sigma(E^{(\beta)}) = E^{(\sigma\circ \beta)}$. 
Let $\alpha = \sigma\circ \beta$.
By construction, $T_\alpha(E_0) = H_1(E', \mathbb{Z}) \cong R$. 
Further, by~\cite[Prop. II.1.2]{SilvermanAT}, $T_\alpha(E) \cong \mathfrak{a} \otimes_R T_\alpha(E_0) \cong \mathfrak{a}$, as claimed.
\end{proof}

\section*{Acknowledgments}

We thank the American Institute of Mathematics (AIM) for supporting a workshop on multilinear maps where the initial
seeds for this work were developed, and 
the Banff International Research Station (BIRS) where our collaboration continued.
We also thank Michiel Kosters and Yuri Zarhin. 
Boneh was partially supported by NSF, DARPA, and ONR.
Silverberg was partially supported by a grant from the Alfred P.\ Sloan Foundation and NSF grant CNS-1703321.  

\newcommand{\etalchar}[1]{$^{#1}$}

\appendix

\section{Background on isogenies}
\label{sec:isog}

This appendix provides background on isogenies of ordinary elliptic
curves over finite fields and their products, which expands upon the
short discussion at the beginning of Section~\ref{Sec:IsogMaps} and makes it explicit.
Throughout this appendix, $E$ is a fixed ordinary elliptic curve over the
finite field $\F_q$.

\subsection{Endomorphism rings} Let $\scrO = \End(E)$
be the endomorphism ring of the elliptic curve $E$. Note that since, in the ordinary case, all
endomorphisms over the algebraic closure are already defined over $\F_q$,
it is not necessary to distinguish between the endomorphisms of $E$ defined over $\F_q$ or over 
${\overline{\F_q}}$.

Denote by $\pi\in\scrO$ the Frobenius endomorphism of $E$, and by $\scrO_K$
the integral closure of $\scrO$, i.e., the ring of integers of
$\End(E)\otimes_{\Z}\Q$. We have an inclusion of orders 
$\Z[\pi] \subset \scrO \subset \scrO_K$
and both inclusions are strict in general.

However, the conductor $m=\big[\scrO_K:\Z[\pi]\big]$ is typically quite
small. Let $t$ be the trace of the Frobenius $\pi$ (computed using
Schoof's algorithm) and $D_\pi = t^2 - 4p$ the discriminant of $\Z[\pi]$;
then $m$ is given by $D_\pi = m^2 D$ where $D$ is the (fundamental)
discriminant of $\scrO_K$. Hence, $m^2$ is the quotient of $D_{\pi}$ by
its squarefree part (or $4$ times its squarefree part when $D_{\pi}$ is
even). This is usually a small number, and equal to $1$ with constant
probability for a randomly chosen curve $E$.

Therefore, we will assume for simplicity that $m=1$, and thus
$\Z[\pi] = \scrO = \scrO_K$. Note that strictly speaking, it is not known
how to test for that condition in polynomial time (squarefreeness is not
known to be easier than factoring), but we can easily test for stronger
conditions, such as $D_{\pi}$ being prime or the product of a large prime
by a small squarefree integer, and since restricting attention to such
special discriminants is desirable anyway (to avoid smooth class
numbers), this is a reasonable test to carry out.

In principle, it should be possible to extend what follows to the general
case of an arbitrary endomorphism ring $\scrO$ as long as we restrict
attention to, on the one hand, the subset $\Ell(\scrO)$ of the isogeny
class of $E$ over $\F_q$ consisting of curves with the same endomorphism
ring $\scrO$ as $E$, and on the other hand, isogenies of degree prime to
$m$. In particular, the latter are guaranteed to preserve the endomorphism
ring, and suffice to apply the expander graph argument of
Jao--Miller--Venkatesan~\cite{JaoMilVen09}; see below. There are a few
complications in considering the general case, however, so the extension
is not considered in the present discussion.

\subsection{Classification of abelian varieties isogenous to a power of
$E$}\label{sec:classif}

Let $\Ab(E)$ denote the category of abelian varieties over
$\F_q$ isogenous\footnote{The isogeny can be defined over $\F_q$ or the
algebraic closure; the two notions coincide.} to a product of copies of
$E$.
In the case when $\Z[\pi] = \scrO_K = \scrO$, the functor $\Hom(E,-)$
is an equivalence between the category $\Ab(E)$ and the category $\Mod(\scrO)$ of torsion-free $\scrO$-modules of
finite type. The inverse functor associates to any module
$M\in\Mod(\scrO)$ with finite presentation $\scrO^n\to \scrO^m\to M\to 0$
the abelian variety $M \otimes_{\scrO} E$ defined by taking the corresponding
cokernel $E^n\to E^m\to M \otimes_{\scrO} E\to 0$. This is established in
Serre's appendix~\cite{Serre02} to \cite{Lauter02}. See also \cite{Kani11}
and \cite{JKPRST18} for more general related results.

In dimension $1$, this implies that the set
$\Ell(\scrO)$ of isomorphism classes of elliptic curves over $\F_q$ isogenous to $E$ is in
bijection with the set of isomorphism classes of torsion-free modules of
rank $1$ over $\scrO$. Since any such module is isomorphic to an ideal of
$\scrO$, this yields a bijection between $\Ell(\scrO)$ and the ideal
class group $\Cl(\scrO)$ of $\scrO$.

Under that correspondence, an ideal $\mathfrak{a}$ maps to the elliptic
curve $\mathfrak{a}\otimes_{\scrO} E$ given by $E/E[\mathfrak{a}]$, where
$E[\mathfrak{a}]$ is the intersection of the kernels of all endomorphisms
in $\mathfrak{a}$ (that notation is consistent with the usual one for
torsion subgroups). Denote that elliptic curve by $\mathfrak{a}\ast E$.
Its endomorphism ring is clearly still $\scrO$, and the identification is
canonical (e.g., because the Frobenius endomorphism on each curve is
well-defined). We can therefore consider the elliptic curve
$\mathfrak{b}\ast\big(\mathfrak{a}\ast E\big)$ for some other ideal
$\mathfrak{b}$ of $\scrO$, and it turns out to be isomorphic to
$\mathfrak{ab}\ast E$. Thus, $\ast$ endows $\Ell(\scrO)$ with
a group action by $\Cl(\scrO)$, under which it is a principal homogenous
space.

By Steinitz's classification theorem of torsion-free modules of finite type over
Dedekind domains~\cite{Steinitz} (which
says that any such module is isomorphic to a sum of ideals, with equality
when the product ideals are in the same class) and the
equivalence of categories, we see that any abelian variety $A\in\Ab(E)$
is of the form $E_1\times\cdots\times E_g$ for some elliptic curves $E_i$
of the form $\mathfrak{a}_i\ast E$. Moreover, two such abelian varieties,
\begin{align*}
A &= E_1\times\cdots\times E_g \qquad (E_i = \mathfrak{a}_i\ast E) \quad \textrm{ and}\\
A'&= E'_1\times\cdots\times E'_g \qquad (E'_i = \mathfrak{a}'_i\ast E),
\end{align*}
are isomorphic if and only if $[\mathfrak{a}_1\cdots\mathfrak{a}_g] =
[\mathfrak{a}'_1\cdots\mathfrak{a}'_g]$ in $\Cl(\scrO)$. In particular,
for any fixed $g\geq 1$, the isomorphism classes of abelian varieties of
dimension $g$ are in bijection with $\Cl(\scrO)$,
given by mapping $[\mathfrak{a}]\in\Cl(\scrO)$ to $(\mathfrak{a}\ast
E)\times E^{g-1}$.

\subsection{Random sampling and isogeny walks.}\label{sec:rand-sampl-isog}
The group $\Cl(\scrO)$ is usually large; by Dirichlet's class number
formula combined with Littlewood's estimate for the values at $1$ of the
$L$-functions of quadratic characters~\cite{Littlewood28}, its order
$h(D)$ satisfies
\[ h(D) \gg \frac{\sqrt{|D|}}{\log\log |D|} \]
if one assumes that the Generalized Riemann Hypothesis holds for quadratic fields 
(where $D=t^2-4q$ can be chosen of the same order of magnitude as $q$ in
absolute value). Moreover, the best known algorithms to compute its
structure or even its order are subexponential.

We would therefore like to be able to sample uniformly from $\Cl(\scrO)$
without having to compute the group structure or the order, and do so in
such a way that if the sampling algorithm returns the class of an ideal
$\mathfrak{a}$, it is feasible to compute $\mathfrak{a}\ast E$ (which
cannot be done directly with an ideal of large norm, as it involves
computing a large degree isogeny).

To solve this problem, the same approach as in the supersingular case
basically works: simply obtain an element of the class group as a product
of sufficiently many ideals of small prime norm (and the corresponding
isogeny as the composition of the corresponding isogenies of small prime
degrees). Contrary to the supersingular case, it is not possible to use a
fixed degree, but using all small primes up to a polynomial bound is
enough, as we will discuss below.

\sssparagraph{Prime ideals and $\ell$-isogenies}
Before we discuss the sampling of random classes in $\Cl(\scrO)$, we
must recall a few facts about the ideals of the quadratic ring $\scrO$
and the corresponding isogenies. All of this is described in particular
in \cite{EPRINT:Couveignes06}, for example.
Any ideal $\mathfrak{a}$ of $\scrO$ can be factored uniquely
$\mathfrak{a} = \mathfrak{l}_1^{e_1} \cdots \mathfrak{l}_r^{e_r}$
into a product of powers of prime ideals $\mathfrak{l}_i$. To any prime
ideal $\mathfrak{l}$, one can associate the unique rational prime
$\ell\in\Z^+$ above which it sits ($\ell$ is such that
$\mathfrak{l}\cap\Z = \ell\Z$), and prime ideals can thus be obtained
from the ideal decomposition in $\scrO$ of principal ideals generated by
rational primes $\ell$.

Since $\scrO$ is quadratic, that decomposition is easy to describe in
terms of the Legendre symbol $\leg{D}{\ell}$. A rational prime $\ell$
such that $\leg{D}{\ell} = -1$ is inert in $\scrO$, i.e., $\ell\scrO$ is
 itself prime. As a result, these prime ideals are trivial in the class
group, and not interesting for our purposes. On the other hand, a
rational prime $\ell$ such that $\leg{D}{\ell} = 1$ is completely split:
we have $\ell\scrO = \mathfrak{l}\cdot\bar{\mathfrak{l}}$ for distinct
prime ideals $\mathfrak{l}\neq \bar{\mathfrak{l}}$ of norm $\ell$. This
case is the most relevant for us; it gives rise to $\ell$-isogenies.

Finally, there are finitely many rational primes $\ell$ such that
$\leg{D}{\ell} = 0$, i.e., primes dividing the discriminant. Those
primes are ramified: $\ell\scrO = \mathfrak{l}^2$ for some prime ideal of
norm $\ell$. As a result, a ramified prime $\mathfrak{l}$ is of order at
most $2$ in the class group; moreover, it is a consequence of the Chinese remainder
theorem that one can  always find an ideal
$\mathfrak{a}$ of $\scrO$ coprime to $D$ such that $[\mathfrak{l}] =
[\mathfrak{a}]$ in $\Cl(\scrO)$.  Thus, we can always ignore these primes in order to make the presentation more consistent.

It results from the above that any ideal class
$[\mathfrak{a}]\in\Cl(\scrO)$ can be written as a product
$[\mathfrak{l}_1]^{e_1}\cdots [\mathfrak{l}_r]^{e_r}$ of powers of
classes of prime ideals over split rational primes $\ell_i$ (such that
$\leg{D}{\ell_i} = 1$). Computing $\mathfrak{a}\ast E$ therefore reduces
to computing $\mathfrak{l}\ast C$ for an arbitrary prime ideal
$\mathfrak{l}$ over a split rational prime $\ell$, and $C$ an arbitrary elliptic curve isogenous to $E$. This still cannot be
done when the norm $\ell$ is large, but if it is small (say
$\polylog(q)$), it can be done efficiently.

Indeed, for such a split $\ell$, there are exactly two $\ell$-isogenies
from $C$: if we let $\ell\scrO = \mathfrak{l}\cdot\bar{\mathfrak{l}}$,
these are $C\to C/C[\mathfrak{l}]$ and $C\to C/C[\bar{\mathfrak{l}}]$. As
usual for low degree isogenies, they can be computed efficiently by
evaluating modular polynomials. Moreover, we can easily distinguish
between the two by looking at the action of the Frobenius endomorphism
$\pi$ on the $\ell$-torsion $C[\ell]$ (viewed as an $\F_\ell$-vector space
of dimension $2$).

More precisely, since $\leg{D}{\ell} = 1$, the characteristic polynomial
$X^2 - tX + q$ of the Frobenius $\pi$ has two distinct roots $\mu\neq
\bar{\mu}$ modulo $\ell$, and we can then write the ideals $\mathfrak{l},
\bar{\mathfrak{l}}$ as $(\ell,\pi-\mu)$ and $(\ell,\pi-\bar{\mu})$
respectively (up to reordering). If we view $C[\ell]$ as an
$\F_\ell$-vector space of dimension $2$, the two kernels
$C[\mathfrak{l}]$ and $C[\bar{\mathfrak{l}}]$ will be vector subspaces given by
\[ C[\mathfrak{l}] = C[\ell]\cap\Ker(\pi-\mu) \quad\text{and}\quad
   C[\bar{\mathfrak{l}}] = C[\ell]\cap\Ker(\pi-\bar{\mu})
\]
respectively, according to their generators above. In other words, these
are the eigenspaces of the Frobenius on $C[\ell]$, and $\pi$ acts by
multiplication by $\mu$ on $C[\mathfrak{l}]$ (resp.{} $\bar{\mu}$ on
$C[\bar{\mathfrak{l}}]$).

\sssparagraph{Random sampling in $\Cl(\scrO)$}
We have just seen that we can efficiently compute $\mathfrak{l}\ast C$
for any prime ideal $\mathfrak{l}$ of $\scrO$ over a split rational prime
$\ell = \polylog(q)$. We now argue that those prime ideals are in fact
sufficient to sample from a distribution close to the uniform
distribution in $\Cl(\scrO)$.

This results from a theorem of Jao, Miller and
Venkatesan~\cite[Th.~1.1]{JaoMilVen09} essentially saying that the Cayley
graph of $\Cl(\scrO)$ obtained using the classes of the ideals
$\mathfrak{l}$ associated to all split rational primes $\ell \leq x$ with
$x = \log^B |D|$ is an expander graph for any $B>2$. Moreover, there is an
explicit bound on the spectral gap; namely, if we denote by
$\lambda_{\text{triv}} \sim 2x\log x$ the largest eigenvalue of the
adjacency matrix, then all other eigenvalues $\lambda$ satisfy
\[ |\lambda| = O\big(
(\lambda_{\text{triv}}\log\lambda_{\text{triv}})^{1/2 + 1/B} \big). \]
If we take $B=2+\varepsilon$, this implies that, for a suitable absolute
constant $C$, a random walk of length
\begin{equation}
\label{eq:walklen}
r \geq C\cdot\frac{\delta + \log h(D)}{\varepsilon\cdot \log\log|D|}
\end{equation}
in the graph yields a distribution that is
$\exp(-\delta)$-statistically close to uniform.

Therefore, if we denote by $S$ the set of ideals $\mathfrak{l}$
associated to all split rational primes $\ell \leq x$ as above, and we
choose $r$ elements $\mathfrak{l}_1,\dots,\mathfrak{l}_r$ in $S$ at
random with $r$ as in~\eqref{eq:walklen}, then the class of the ideal
$\mathfrak{a} = \mathfrak{l}_1\cdots\mathfrak{l}_r$ is close to uniformly
distributed in $\Cl(\scrO)$.

We can then compute the elliptic curve $\mathfrak{a}\ast E$ as
$\mathfrak{l}_r \ast \Big(\mathfrak{l}_{r-1}\ast \big(\cdots \ast
(\mathfrak{l}_1\ast E)\big)\Big)$, and its isomorphism class is close to
uniformly distributed in the isogeny class $\Ell(\scrO)$ of $E$.%

\subsection{A low-brow approach to the classification result}
\label{sec:lowbrow}

In Appendix~\ref{sec:classif} above, we  recalled  results of
Serre and Steinitz that are sufficient to establish the following
properties, used in the main construction of Section~\ref{Sec:IsogMaps}.
Suppose that $E$ is such that $\scrO = \Z[\pi]$ is integrally closed, and
let $\mathfrak{a}_1,\dots,\mathfrak{a}_n,\mathfrak{a}'_1,\dots,
\mathfrak{a}'_n$ be ideals of $\scrO$. Then, the following abelian
varieties are isomorphic:
\begin{equation}
\label{eq:prodabvar-lb}
(\mathfrak{a}_1\ast E)\times\cdots\times(\mathfrak{a}_n\ast E) \cong
(\mathfrak{a}_1\cdots\mathfrak{a}_n)\ast E\times E^{n-1}.
\end{equation}
and more generally, we have
\begin{equation}
\label{eq:prodabvarisomcond-lb}
\begin{split}
(\mathfrak{a}_1\ast E)\times\cdots\times(\mathfrak{a}_n\ast E) \cong
(\mathfrak{a}'_1\ast E)\times\cdots\times(\mathfrak{a}'_n\ast E)
&\quad\text{if and only if}\\
\mathfrak{a}_1\cdots\mathfrak{a}_n =
\mathfrak{a}'_1\cdots\mathfrak{a}'_n
&\quad\text{as ideal classes in $\Cl(\scrO)$.}
\end{split}
\end{equation}
As a side note, we
now mention that those properties can \emph{in part} be established using
elementary techniques.
More precisely, \eqref{eq:prodabvar-lb} is a consequence of the following
elementary result.
\begin{theorem}
\label{thm:isomg}
Let $E$ be an elliptic curve over a finite field $\F_q$, and $K$ a finite
\'etale subgroup of $E$ (i.e., the map $E\to E/K$ is separable) defined over
$\F_q$. Suppose that $K$ contains subgroups $K_i$ defined over $\F_q$, for $1\leq i\leq n$, whose orders are
pairwise coprime, and suppose $K=K_1+\cdots+K_n$.
Then: 
\[ (E/K_1)\times\cdots\times (E/K_n) \cong (E/K)\times E^{n-1}. \]
\end{theorem}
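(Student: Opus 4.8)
The plan is to realize both sides of the claimed isomorphism as quotients of $E^n$ by finite (étale) subgroup schemes and then produce an automorphism of $E^n$ carrying one subgroup onto the other. Write $d_i = |K_i|$. First I would record that the hypotheses force the sum $K = K_1 + \cdots + K_n$ to be internal and direct: if $x_1 + \cdots + x_n = 0$ with $x_i \in K_i$, then the order of $x_1$ divides $d_1$ and also divides $\mathrm{lcm}(d_2,\dots,d_n) = d_2\cdots d_n$, which is coprime to $d_1$, so $x_1 = 0$, and likewise each $x_i = 0$. Hence $K \cong \bigoplus_i K_i$ and $|K| = d \deq \prod_i d_i$. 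Set $m_i \deq d/d_i = \prod_{j\neq i} d_j$. Two elementary coprimality facts then hold: $\gcd(m_i,d_i)=1$ for each $i$ (a prime dividing $d_i$ divides none of the $d_j$ with $j\neq i$), and $\gcd(m_1,\dots,m_n)=1$ (a prime dividing every $m_i$ would divide every $d_i$).

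Next I would build the automorphism. Since $\gcd(m_1,\dots,m_n)=1$, the vector $(m_1,\dots,m_n)$ is primitive in $\Z^n$, hence is the first column of some matrix $\Theta \in \mathrm{GL}_n(\Z)$; as $\Theta^{-1}$ also has integer entries, $\Theta$ acts as an automorphism of $E^n$, defined over $\F_q$. I claim $\Theta$ maps the subgroup $K\times\{0\}^{n-1}\subseteq E^n$ onto $K_1\times\cdots\times K_n$. On geometric points this is a direct computation: for $k = \sum_j k_j \in K$ with $k_j \in K_j$ one has $\Theta(k,0,\dots,0) = (m_1 k,\dots,m_n k)$, and $m_i k = m_i k_i$ because $d_j \mid m_i$ annihilates $k_j$ for $j\neq i$; as $k$ ranges over $\bigoplus_j K_j$ the coordinate $k_i$ ranges independently over all of $K_i$, and multiplication by $m_i$ is an automorphism of $K_i$ since $\gcd(m_i,d_i)=1$, so $(m_1 k_1,\dots,m_n k_n)$ ranges over exactly $K_1\times\cdots\times K_n$. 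Because $K$ is étale, so are $K\times\{0\}^{n-1}$ and $K_1\times\cdots\times K_n$, and an étale subgroup scheme is determined by its geometric points; hence this identity on points upgrades to an equality of subgroup schemes $\Theta(K\times\{0\}^{n-1}) = K_1\times\cdots\times K_n$.

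Finally I would conclude: an automorphism of $E^n$ carrying $H_1 \deq K\times\{0\}^{n-1}$ onto $H_2 \deq K_1\times\cdots\times K_n$ induces an isomorphism $E^n/H_1 \xrightarrow{\ \sim\ } E^n/H_2$, and these quotients are respectively $(E/K)\times E^{n-1}$ and $(E/K_1)\times\cdots\times(E/K_n)$ (quotient by a product subgroup being the product of the quotients); everything is over $\F_q$ since $\Theta$, $K$, and the $K_i$ are. This yields the claimed isomorphism. The only points needing a little care — none of them genuinely hard — are the completion of a primitive vector to a $\mathrm{GL}_n(\Z)$ matrix (standard, via Smith normal form or the freeness of $\Z^n/\Z v$ for primitive $v$) and the passage from geometric points to subgroup schemes, which is exactly where the étale hypothesis, together with the coprimality of the $d_i$ (ruling out any $p$-torsion interaction), gets used. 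The real content — the ``main obstacle'' — is simply spotting the right automorphism, namely the integer matrix whose first column is $\big(|K|/|K_1|,\dots,|K|/|K_n|\big)$.
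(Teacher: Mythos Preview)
Your proof is correct, and it takes a genuinely different route from the paper's. The paper proceeds by induction on $n$: for $n=2$ it writes down an explicit $2\times 2$ matrix of isogenies between $E\times(E/K)$ and $(E/K_1)\times(E/K_2)$ built from the quotient maps $\varphi_i\colon E\to E/K_i$, their duals, and B\'ezout coefficients for $|K_1|,|K_2|$, then checks by hand that the matrix and its ``inverse'' compose to the identity; the general case follows by grouping $K_1+\cdots+K_n$ against $K_{n+1}$. Your argument instead lifts both sides to quotients of $E^n$ and finds a single element of $\mathrm{GL}_n(\Z)$, namely any completion of the primitive vector $(d/d_1,\dots,d/d_n)$, that carries $K\times\{0\}^{n-1}$ onto $K_1\times\cdots\times K_n$; this avoids induction entirely and never touches dual isogenies. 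The paper's approach has the advantage of giving a fully explicit isomorphism in the $n=2$ case (useful if one wants the map, not just its existence) and does not actually invoke the \'etale hypothesis in the proof. Your approach is cleaner and more uniform in $n$, but leans on \'etaleness to pass from geometric points to subgroup schemes; that reliance is harmless here since the hypothesis is present, but it is worth noting that the integer-matrix idea could also be made to work scheme-theoretically without it.
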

\begin{proof}
The result is immediate for $n=1$. 
We next prove the result for $n=2$ by constructing an explicit
isomorphism. Consider the commutative diagram:
\[ \xymatrix{%
E \ar[r]^{\varphi_1} \ar[d]_{\varphi_2} \ar[dr]^{\theta} &
E/K_1 \ar[d]^{\psi_1} \\
E/K_2 \ar[r]_{\psi_2} & E/K}
\]
where all maps are the natural quotient isogenies. If we denote by $m_1$
and $m_2$ the orders of $K_1$ and $K_2$, we have $\deg\varphi_1 =
\deg\psi_2 = m_1$ and $\deg\varphi_2 = \deg\psi_1 = m_2$. Now choose integers
$a,b\in\Z$ such that $am_1+bm_2 = 1$. We define morphisms
\[ f\colon E\times (E/K)\to (E/K_1) \times (E/K_2) \qquad\text{and}\qquad
   g\colon (E/K_1) \times (E/K_2)\to E\times (E/K)
\]
by the following matrices:
\[ \Mat(f) = \begin{pmatrix} \varphi_1 &  \widehat{\psi_1} \\
                           -b\varphi_2 & a\widehat{\psi_2}
   \end{pmatrix} \qquad\text{and}\qquad
   \Mat(g) = \begin{pmatrix} a\widehat{\varphi_1} & -\widehat{\varphi_2} \\
                             b\psi_1              & \psi_2
   \end{pmatrix}.
\]
Then:
\begin{align*}
\Mat(g\circ f) &= 
\begin{pmatrix} a\widehat{\varphi_1} & -\widehat{\varphi_2} \\
b\psi_1 & \psi_2 \end{pmatrix} \begin{pmatrix} \varphi_1 &
\widehat{\psi_1} \\ -b\varphi_2 & a\widehat{\psi_2} \end{pmatrix} \\
&= \begin{pmatrix}
a\widehat{\varphi_1}\varphi_1 + b\widehat{\varphi_2}\varphi_2 &
a\widehat{\varphi_1}\widehat{\psi_1} - a\widehat{\varphi_2}\widehat{\psi_2}\\
b\psi_1\varphi_1 - b\psi_2\varphi_2 &
b\psi_1\widehat{\psi_1} + a\psi_2\widehat{\psi_2} \end{pmatrix} \\
&= \begin{pmatrix}
a[m_1] + b[m_2] & (a-a)\widehat{\theta} \\ (b-b)\theta & b[m_2] + a[m_1]
\end{pmatrix} = \begin{pmatrix} 1 & 0 \\ 0 & 1 \end{pmatrix}.
\end{align*}
Thus, $g\circ f$ is the identity on $E\times (E/K)$, and similarly $f\circ g$ is the identity on $(E/K_1) \times (E/K_2)$. 
So $f$ and $g$ are mutually inverse isomorphisms, proving the case $n=2$.

We now proceed by induction. Suppose the desired result holds for some $n$, and suppose
$K$ can be written as $K=K_1+\cdots+K_{n+1}$ with the $K_i$'s
of pairwise coprime order. Let $H=K_1+\cdots+K_n$. The induction hypothesis implies
\begin{equation}
\label{npart1}
(E/K_1)\times\cdots\times (E/K_n) \cong (E/H) \times E^{n-1}. 
\end{equation}
On the other hand, $K = H+K_{n+1}$ and the subgroups $H$ and
$K_{n+1}$ have coprime order, so by the $n=2$ case above:
\begin{equation}
\label{npart2}
(E/H) \times (E/K_{n+1}) \cong (E/K) \times E. 
\end{equation}
Combining \eqref{npart1} and \eqref{npart2} gives:
\begin{align*}
 (E/K_1)\times\cdots\times (E/K_n)\times (E/K_{n+1}) &\cong
 ((E/H) \times E^{n-1}) \times (E/K_{n+1}) \\ &\cong ((E/H)\times (E/K_{n+1}))\times
   E^{n-1} \cong (E/K) \times E^n
\end{align*}
which completes the induction.
\end{proof}

Applying Theorem~\ref{thm:isomg} to $K_i = E[\mathfrak{a}_i]$, we obtain \eqref{eq:prodabvar-lb} in the special case when the $K_i$'s
have pairwise coprime orders, i.e., when the ideals $\mathfrak{a}_i$ have
pairwise coprime norms. The general case follows by observing that for
any choice of the $\mathfrak{a}_i$'s, one can find ideals
$\mathfrak{b}_i$ with pairwise coprime norms such that $\mathfrak{b}_i$
is in the same ideal class as $\mathfrak{a}_i$ (just let $\mathfrak{b}_1
= \mathfrak{a}_1$, and construct $\mathfrak{b}_{i+1}$ as an ideal in the
class of $\mathfrak{a}_{i+1}$ coprime to
$N(\mathfrak{a}_1\cdots\mathfrak{a}_i)\cdot\scrO$).

As a result, the ``if'' part of property~\eqref{eq:prodabvarisomcond-lb}
also follows. Indeed, by \eqref{eq:prodabvar-lb},
$$
(\mathfrak{a}_1\ast E)\times\cdots\times(\mathfrak{a}_n\ast E) \cong (\mathfrak{a}_1\cdots\mathfrak{a}_n)\ast E\times E^{n-1}$$ and
$$(\mathfrak{a}'_1\ast E)\times\cdots\times(\mathfrak{a}'_n\ast E) \cong (\mathfrak{a}'_1\cdots\mathfrak{a}'_n)\ast E\times E^{n-1}.$$
 If the ideal classes of
$\mathfrak{a}_1\cdots\mathfrak{a}_n$ and
$\mathfrak{a}'_1\cdots\mathfrak{a}'_n$ are equal in $\Cl(\scrO)$, then 
\[ 
(\mathfrak{a}_1\ast E)\times\cdots\times(\mathfrak{a}_n\ast E) \cong
(\mathfrak{a}'_1\ast E)\times\cdots\times(\mathfrak{a}'_n\ast E)
\]
as desired.

However, the converse (the ``only if'' part of
property~\eqref{eq:prodabvarisomcond-lb}) is more difficult, and essentially states that for $E'$ and $E''$ isogenous to $E$, if $E'\times
E^{n-1}\cong E''\times E^{n-1}$ then $E'\cong E''$. Such a
cancellation result does not hold for a supersingular $E$, so the proof
in the ordinary case has to rely on the structure of the endomorphism
ring in a crucial way. See~\cite{Shioda77} for an illuminating
discussion, particularly Section~4 for the $n=2$ case.

\section{Mumford's invariant}
\label{sec:algebr-theta-funct}

In this appendix we give details of the result on Mumford's theta null invariants that was discussed in \cref{sec:failure-theta-null}.

\subsection{Theta functions}
\label{sec:theta-functions}

We review what we need from Mumford's theory of theta functions. Let $A$ be a $g$-dimensional abelian variety over an algebraically closed field $k$. In practice, we need only work over a low degree extension of our base field. We will assume $A$ comes equipped with a principal polarization given by an invertible sheaf $L_0$; the choice of such a polarization will later be crucial for our analysis. Let $L$ be an ample symmetric invertible sheaf on $A$ with $\mathrm{char}{(k)} \nmid \deg L$. Going forwards, all sheaves will be of this form. Typically we let $L = L_0^m$ for some $m$ not divisible by $\mathrm{char}(k)$.

For $P \in A$, write $\tau_P$ for the translation-by-$P$ map. Let $H(L) \subset A$ be the set of $P$ for which $\tau_P^*L \cong L.$
For convenience, we will assume $H(L) = A[m]$ for some choice of $m$. Let $G(L)$ be the \emph{theta group} of $L$, which consists of pairs $(P, \psi_P)$ where $P \in H(L)$ and $\psi_P: L \to L$ is an isomorphism that factors through $\tau_P^*L$; that is, there is some isomorphism $\tilde{\psi}_P: \tau_P^*L \to L$ such that $\psi_P = \tilde{\psi}_P \circ \tau_P^*$. The group operation is
\[
  (P,\psi_P) (Q,\psi_Q) = (P+Q,\psi_P \circ \psi_Q),
\]
which one checks is well-defined. The theta group lies in an exact sequence
\begin{equation}
  0 \longrightarrow k^* \longrightarrow G(L) \longrightarrow H(L) \longrightarrow 0\label{eq:theta-group}
\end{equation}
where the image of $\alpha \in k^*$ is $(O,\alpha)$, and the map $G(L) \to H(L)$ is $(P,\psi_P) \mapsto P$. Let $V = \Gamma(A,L)$. Then $G(L)$ has a natural action on $V$.

Let $\delta$ be the $g$-tuple $(m,m,\dots,m)$; we call $\delta$ the \emph{level} or \emph{type} of $L$. Let $K_1 = (\Z/m\Z)^g$, let $K_2 = \Hom(K_1, k^*)$, and let $H(\delta) = K_1 \times K_2$. Write $\braket{x,y}$ for the evaluation pairing on $H(\delta)$, which we extend to a symplectic form by letting $K_1 \times \{0\}$ and $\{0\} \times K_2$ be isotropic subspaces. Let $G(\delta)$, the \emph{abstract theta group of level $\delta$}, be the set $k^* \times H(\delta)$ with group operation
\[
  (a, x, y) \cdot (b, x', y') = (\braket{x,y'}ab, x+x',y+y').
\]
Then $G(\delta)$ lies in the exact sequence
\[
  0 \longrightarrow k^* \longrightarrow G(\delta) \longrightarrow H(\delta) \longrightarrow 0.
\]
Let $V(\delta)$ be the $k$-vector space of maps from $K_1$ to $k$. Then $G(\delta)$ acts on $V(\delta)$ via
\[
  (a,x,y)(f)(z) = a\braket{z,y}f(x+z).
\]
A {\bf{level $\delta$ theta structure}} for $A$ is a choice of $L$ as
above as well as a choice of isomorphism $\alpha:G(\delta) \to G(L)$ that
is the identity on $k^*$. Mumford~\cite[p.~298]{Mumford} shows that $\alpha$ induces an isomorphism $\beta:V(\delta) \to V$, unique up to a scalar, that respects $\alpha$. That is, for all $g \in G(\delta)$ and $f \in V(\delta)$, we have
$\beta(gf) = \alpha(g)\beta(f).$
Let $x_0,\dots,x_N$ be a fixed ordering of the elements of $K_1$, and let $f_i$ be the delta function at $x_i$; that is, $f_i(x_i) = 1$ and $f_i(x_j) = 0$ for $i \neq j$. The $f_i$ then furnish a canonical basis for $V(\delta)$. Let $\theta_i = \beta(f_i)$. Define $\phi: A \longrightarrow \pro^N$ by $\phi(P) = [\theta_0(P):\cdots:\theta_N(P)]$. Then $\phi(O)$ is the {\bf{theta null associated to $\alpha$}}. Thus, a choice of theta structure gives rise to a point $\phi(O)$ in projective space.

\subsection{Properties of theta structures}
\label{sec:prop-theta-struct}

\begin{proposition} \label{prop:level-4-embedding}
  Given a theta structure $\alpha: G(\delta) \to G(L)$, if $4 \mid \delta$, then the associated map $\phi:A \to \pro^N$ is an embedding. Furthermore, $\phi(A)$ is determined by the theta null $\phi(O)$ associated to $\alpha$.
\end{proposition}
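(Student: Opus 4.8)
The plan is to treat the two assertions separately. For the embedding statement, observe that $4 \mid \delta$ forces $m \geq 4$, and in particular $m \geq 3$, so the symmetric ample sheaf $L = L_0^m$ is very ample and the complete linear system $|L|$ realizes $A$ as a projectively normal closed subvariety of $\pro\big(\Gamma(A,L)^*\big)$ (see \cite{Mumford}). Since $\beta : V(\delta) \to V = \Gamma(A,L)$ is an isomorphism carrying the canonical basis $\{f_i\}$ of $V(\delta)$ to the sections $\{\theta_i\}$, and since $N+1 = m^g = \dim_k \Gamma(A,L)$, the map $\phi = [\theta_0 : \cdots : \theta_N]$ is exactly the embedding attached to $|L|$, written out in the basis $\{\theta_i\}$. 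Hence $\phi$ is a closed immersion onto a projectively normal abelian subvariety of $\pro^N$.

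For the second assertion, the goal is to present the homogeneous ideal $I\big(\phi(A)\big) \subseteq k[X_0,\dots,X_N]$ as generated by quadrics whose coefficients are polynomials in the coordinates $\theta_0(O),\dots,\theta_N(O)$ of the theta null. The mechanism is the Riemann addition relations. The approach is to analyze the isogeny $A \times A \to A \times A$, $(x,y) \mapsto (x+y, x-y)$, together with the theta structures it induces on the two factors from $\alpha$; pulling back global sections along it expresses each product $\theta_i(x+y)\,\theta_j(x-y)$ as a $k$-linear combination of the products $\theta_a(x)\,\theta_b(y)$, with coefficients assembled from the theta-null values $\theta_\ell(O)$. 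Specializing these bilinear identities — for instance restricting to the diagonal and setting $y = O$ — yields quadratic relations satisfied by $\theta_0,\dots,\theta_N$ on $A$, and Mumford's structure theorem \cite{Mumford} shows that when $4 \mid \delta$ these quadrics generate $I\big(\phi(A)\big)$. As all the coefficients involved are polynomial in the $\theta_\ell(O)$, the ideal, and therefore the subvariety $\phi(A) \subseteq \pro^N$, is determined by the point $\phi(O)$ alone.

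The main obstacle is the second assertion, specifically the claim that the quadratic Riemann relations generate the \emph{whole} homogeneous ideal of $\phi(A)$ (equivalently, cut it out scheme-theoretically) rather than merely vanishing on it. This is precisely where $4 \mid \delta$ enters — with only $2 \mid \delta$ one still obtains an embedding but may need generators of higher degree — and it is the technical heart of Mumford's treatment, requiring the careful bookkeeping of theta structures on $A \times A$, the behaviour of the theta groups under the isogeny above, and a dimension or cohomology argument to see that no further relations are needed. The embedding statement, by contrast, is formal once one invokes very ampleness of $L_0^m$ for $m \geq 3$ and matches bases via $\beta$.
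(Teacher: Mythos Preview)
The paper does not prove this proposition at all: it simply records it as ``the Corollary on p.~349 of \cite{Mumford}'' and moves on. Your proposal, by contrast, sketches the actual content of Mumford's argument --- very ampleness of $L$ for $m\geq 3$ giving the embedding, and the Riemann addition relations (coming from the isogeny $(x,y)\mapsto(x+y,x-y)$ on $A\times A$) producing quadrics with theta-null coefficients that generate the homogeneous ideal when $4\mid\delta$. That outline is essentially faithful to what Mumford does, and your identification of the hard step (showing the quadrics \emph{generate} the ideal rather than merely vanish on $\phi(A)$) is accurate. One small caveat: you write $L=L_0^m$, which is the ``typical'' case the paper flags, but the proposition as stated allows any symmetric ample $L$ with $H(L)=A[m]$; the embedding argument still goes through since such an $L$ differs from $L_0^m$ by a translation and a $2$-torsion element of $\operatorname{Pic}^0$, but strictly speaking you should note this. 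In short: your approach is correct and goes well beyond the paper's own treatment, which is a bare citation.
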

Since for us, $\delta = (m, m, \dots, m)$, $4 \mid \delta$ means $4 \mid m$. The above proposition is the Corollary on p.~349 of \cite{Mumford}. The equations defining $\phi(A)$ are usually called the \emph{Riemann equations}. 
As a consequence, we see that the theta null associated to a theta structure of type $\delta$, with $4 \mid \delta$, is a nondegenerate invariant; that is, with $\delta$ chosen beforehand, nonisomorphic abelian varieties cannot have the same theta null.

\begin{lemma}
  Let $\alpha:G(\delta) \to G(L)$ be a theta structure. Then the induced map $H(\delta) \to H(L)$ is a symplectic isomorphism, where the pairing on $H(L)$ is the Weil pairing.
\end{lemma}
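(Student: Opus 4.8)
\emph{Proof proposal.} The plan is to realize \emph{both} symplectic forms as \emph{commutator pairings} of their respective central extensions, and then to observe that an isomorphism of central extensions fixing the center automatically intertwines these commutator pairings. Recall the general construction: given a central extension $0\to k^*\to \tilde G\to H\to 0$ with $H$ abelian, and elements $h_1,h_2\in H$ with arbitrary lifts $\tilde h_1,\tilde h_2\in\tilde G$, the commutator $[\tilde h_1,\tilde h_2]=\tilde h_1\tilde h_2\tilde h_1^{-1}\tilde h_2^{-1}$ lies in $k^*$ (because $H$ is abelian) and is independent of the chosen lifts (because $k^*$ is central). This defines a bilinear, alternating pairing $e_{\tilde G}\colon H\times H\to k^*$.

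First I would compute $e_{\tilde G}$ in the two cases of interest. For $\tilde G=G(\delta)$, a direct calculation from the multiplication rule $(a,x,y)\cdot(b,x',y')=(\braket{x,y'}\,ab,\,x+x',\,y+y')$ gives, for lifts of $(x,y)$ and $(x',y')$ in $H(\delta)$, the commutator $\braket{x,y'}\braket{x',y}^{-1}$; this is exactly the symplectic form on $H(\delta)=K_1\times K_2$ extending the evaluation pairing, with $K_1\times\{0\}$ and $\{0\}\times K_2$ isotropic (up to the harmless replacement of the form by its inverse, which depends only on sign conventions). For $\tilde G=G(L)$ one obtains a pairing $e^L$ on $H(L)$, and Mumford's theory of the pairing $e^L$ (see \cite{Mumford}) identifies $e^L$ with the Weil pairing $e_m$ on $H(L)=A[m]$. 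I would cite this identification rather than reprove it, as it is precisely the content of Mumford's analysis of $e^L$.

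Next, since a theta structure $\alpha\colon G(\delta)\to G(L)$ is by definition a group isomorphism restricting to the identity on $k^*$, it carries $\ker(G(\delta)\to H(\delta))=k^*$ isomorphically onto $\ker(G(L)\to H(L))=k^*$, and therefore descends to an isomorphism $\bar\alpha\colon H(\delta)\to H(L)$ on the quotients (bijectivity being immediate from that of $\alpha$). Because $\alpha$ is a homomorphism it sends commutators to commutators, and because it is the identity on $k^*$ it preserves their $k^*$-values; hence $e^L(\bar\alpha u,\bar\alpha v)=e_{G(\delta)}(u,v)$ for all $u,v\in H(\delta)$. Combining with the two computations of the previous paragraph, $\bar\alpha$ is an isomorphism carrying the symplectic form on $H(\delta)$ to the Weil pairing on $H(L)$, which is the assertion of the lemma.

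The only non-formal ingredient is the identification of the theta-group commutator pairing $e^L$ with the Weil pairing on $A[m]$; everything else is a formal property of central extensions by $k^*$. So I expect that identification (handled by reference to \cite{Mumford}) to be the one real step, with the rest being bookkeeping: the commutator computation in $G(\delta)$ and the descent of $\alpha$ to $\bar\alpha$ via the map of short exact sequences.
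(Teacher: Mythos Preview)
Your proposal is correct and follows the standard line: both pairings are the commutator pairings of the two central extensions, and any isomorphism of extensions fixing $k^*$ intertwines them; the only substantive step is the identification of $e^L$ with the Weil pairing, which you rightly defer to \cite{Mumford}. The paper's own proof is simply the one-line citation ``See p.~317 of \cite{Mumford}'', so your argument is not a different route but rather an unpacking of exactly what that reference contains.
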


\begin{proof}
  See p.~317 of \cite{Mumford}.
\end{proof}

\begin{lemma}\label{lemma:symplectic-to-theta}
  Suppose $H(L) = A[m]$. Given a symplectic isomorphism $h:H(\delta) \to H(L)$, there are exactly $m^{2g}$ theta structures $\alpha$ that induce $h$.
\end{lemma}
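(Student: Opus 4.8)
The plan is to realize the collection of theta structures inducing $h$ as a torsor under an explicitly computable finite group, and then to show separately that this torsor is nonempty. Write $\mathcal{T}_h$ for the set of theta structures $\alpha\colon G(\delta)\to G(L)$ whose induced map $H(\delta)\to H(L)$ equals $h$. If $\alpha_0,\alpha_1\in\mathcal{T}_h$, then $\gamma\deq\alpha_1\circ\alpha_0^{-1}$ is an automorphism of $G(L)$ restricting to the identity on $k^*\subset G(L)$ and inducing the identity on the quotient $H(L)$ (since both $\alpha_i$ are the identity on $k^*$ and both induce $h$); conversely, $\gamma\circ\alpha_0\in\mathcal{T}_h$ for any such $\gamma$. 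One checks this action is free and transitive, so $\mathcal{T}_h$, if nonempty, is a torsor under the group $\Gamma$ of automorphisms of $G(L)$ that are trivial on $k^*$ and on $H(L)$.

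Next I would pin down $\Gamma$. Given $\gamma\in\Gamma$ and $(P,\psi_P)\in G(L)$, the exact sequence~\eqref{eq:theta-group} forces $\gamma(P,\psi_P)=(P,\lambda(P)\psi_P)$ for a unique $\lambda(P)\in k^*$, since $\{\psi_P:(P,\psi_P)\in G(L)\}$ is a $k^*$-torsor for each $P$. Comparing with the group law $(P,\psi_P)(Q,\psi_Q)=(P+Q,\psi_P\circ\psi_Q)$ shows $\gamma$ is a homomorphism exactly when $\lambda\in\Hom(H(L),k^*)$, and $\gamma\mapsto\lambda$ is an isomorphism $\Gamma\xrightarrow{\ \sim\ }\Hom(H(L),k^*)$. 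Since $H(L)=A[m]\cong(\Z/m\Z)^{2g}$ and $k$ is algebraically closed with $\mathrm{char}(k)\nmid m$ (so $k^*$ contains a cyclic group of order $m$), we get $\Hom(H(L),k^*)\cong(\Z/m\Z)^{2g}$, of order $m^{2g}$. Hence $|\mathcal{T}_h|=m^{2g}$, provided $\mathcal{T}_h\neq\varnothing$.

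It remains to prove $\mathcal{T}_h$ is nonempty, i.e., that the given symplectic isomorphism $h$ is induced by at least one theta structure. Here I would appeal to Mumford's theory: both $G(\delta)$ and $G(L)$ are central extensions of a symplectic abelian group by $k^*$ whose commutator map equals the symplectic form — for $G(L)$ this is the content of the preceding lemma, the pairing being the Weil pairing, see~\cite[p.~317]{Mumford} — so they are Heisenberg groups of the same type. Concretely, one picks a symplectic basis of $H(\delta)$, lifts its members to order-$m$ elements of $G(\delta)$ in the standard way, lifts their $h$-images to order-$m$ elements of $G(L)$ realizing the matching commutator relations (using algebraic closedness of $k$ to rescale the lifts), and verifies that the resulting correspondence extends to a group isomorphism $G(\delta)\to G(L)$; this is precisely Mumford's construction of theta structures in~\cite[\S1]{Mumford}.

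The torsor argument and the identification of $\Gamma$ are routine. The genuine input is the nonemptiness statement, which rests on Mumford's classification of Heisenberg groups attached to a symplectic space; that is the step I expect to be the main obstacle if one wants a fully self-contained argument rather than a citation to~\cite{Mumford}.
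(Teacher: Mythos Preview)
Your proof is correct and somewhat more complete than the paper's. The paper argues directly by lifting a symplectic basis: fix generators $(1,x_i,0),(1,0,y_j)$ of $G(\delta)$, observe that any $\alpha$ inducing $h$ must send these to elements $(P_i,\psi_{P_i}),(Q_j,\psi_{Q_j})$ of order dividing $m$, and note that for each $P_i$ or $Q_j$ the possible $\psi$'s form a coset under $\mu_m\subset k^*$, so there are at most $m$ choices per generator and hence at most $m^{2g}$ total. As written this gives only the upper bound (neither existence of one lift nor compatibility of arbitrary choices is checked), which is all the paper actually uses downstream in Proposition~\ref{prop:bounded-number-theta-nulls-proven}. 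Your torsor reformulation via $\Gamma\cong\Hom(H(L),k^*)$ is a cleaner way to get the exact count, and you correctly isolate nonemptiness of $\mathcal{T}_h$ as the one nontrivial input, deferring it to Mumford's Heisenberg-group uniqueness. The two approaches are close in spirit---the paper's ``$m$ choices per generator'' is exactly your $\Gamma$ seen through a basis---but yours separates the structure (torsor) from the existence question, while the paper blurs them and stops at the inequality.
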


\begin{proof}
  Let $x_1,\dots,x_g,y_1,\dots,y_g$ be a symplectic basis for $H(\delta)$. Let $P_i = h(x_i)$ and $Q_i = h(y_i)$. We have that $G(\delta)$ is generated by $k^*$ and the elements $(1,x_i,0)$ and $(1,0,y_i).$
Each of these elements has order $m$ in $G(\delta)$. Let $\alpha:G(\delta)\to G(L)$ be an isomorphism that induces $h$. Then $\alpha((1,x_i,0)) = (P_i, \psi_{P_i})$ and  $\alpha((1,0,y_i)) = (Q_i, \psi_{Q_i})$ for appropriate choices of $\psi_{P_i}$ and $\psi_{Q_i}$; in particular, each of these maps $V \to V$ must have order dividing $m$. From the exact sequence~\eqref{eq:theta-group}, we see that for fixed $i$, any two choices of $\psi_{P_i}$ (resp.~$\psi_{Q_i}$) differ by a nonzero scalar. But such a scalar must be an $m$th root of unity so that $\psi_{P_i}$ (resp.~$\psi_{Q_i}$) has order $m$. Thus, for each $P_i$, there are at most $m$ choices of $\psi_{P_i}$, and similarly for the $Q_i$. Since the choices of $\alpha((1,x_i,0))$ and $\alpha((1,0,y_i))$ completely determine $\alpha$, the claim follows.
\end{proof}

\begin{lemma}
  Let $A_1, A_2$ be abelian varieties and $L_1, L_2$ ample invertible sheaves on $A_1, A_2$, respectively. For $i=1,2$ let $p_i: A_1 \times A_2 \to A_i$ be the projection maps. Then
\[
  G(p_1^*L_1 \otimes p_2^*L_2) = (G(L_1) \times G(L_2))/\{(a,a^{-1})|a \in k^*\}.
\]
\end{lemma}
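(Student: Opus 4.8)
The plan is to exhibit a surjective group homomorphism
\[
\Phi\colon G(L_1)\times G(L_2)\longrightarrow G(L),\qquad L\deq p_1^*L_1\otimes p_2^*L_2,
\]
with $\ker\Phi=\{(a,a^{-1}):a\in k^*\}$, and then invoke the first isomorphism theorem. Throughout I describe a point of $G(L_i)$ lying over $P_i\in H(L_i)$ as a pair $(P_i,\phi_i)$ with $\phi_i\colon\tau_{P_i}^*L_i\xrightarrow{\sim}L_i$ an isomorphism (this $\phi_i$ is the $\widetilde\psi_{P_i}$ in the definition of $G(L_i)$), the group law being $(P_i,\phi_i)(Q_i,\psi_i)=\big(P_i+Q_i,\ \psi_i\circ\tau_{Q_i}^*\phi_i\big)$ and the central $k^*$ embedded by $a\mapsto(O,a\cdot\id)$.

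First I would identify $H(L)$ with $H(L_1)\times H(L_2)$ inside $A=A_1\times A_2$. From $p_i\circ\tau_{(P_1,P_2)}=\tau_{P_i}\circ p_i$ one gets the identity of line bundles $\tau_{(P_1,P_2)}^*L=p_1^*\tau_{P_1}^*L_1\otimes p_2^*\tau_{P_2}^*L_2$, hence $\tau_{(P_1,P_2)}^*L\otimes L^{-1}\cong p_1^*M_1\otimes p_2^*M_2$ with $M_i\deq\tau_{P_i}^*L_i\otimes L_i^{-1}\in\operatorname{Pic}^0(A_i)$. Restricting to $A_1\times\{O\}$ and to $\{O\}\times A_2$ shows that $p_1^*M_1\otimes p_2^*M_2$ is trivial if and only if $M_1$ and $M_2$ are both trivial; therefore $(P_1,P_2)\in H(L)$ if and only if $P_1\in H(L_1)$ and $P_2\in H(L_2)$.

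Next I would define $\Phi\big((P_1,\phi_1),(P_2,\phi_2)\big)=\big((P_1,P_2),\ p_1^*\phi_1\otimes p_2^*\phi_2\big)$; the identity above for $\tau_{(P_1,P_2)}^*L$ shows $p_1^*\phi_1\otimes p_2^*\phi_2$ really is an isomorphism $\tau_{(P_1,P_2)}^*L\xrightarrow{\sim}L$, so the output lies in $G(L)$. That $\Phi$ is a homomorphism is a direct check with the group laws, using $\tau_{(Q_1,Q_2)}^*(p_1^*\phi_1\otimes p_2^*\phi_2)=p_1^*\tau_{Q_1}^*\phi_1\otimes p_2^*\tau_{Q_2}^*\phi_2$ and functoriality of tensor product. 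On the central subgroups, $\Phi\big((O,a_1\cdot\id),(O,a_2\cdot\id)\big)=(O,a_1a_2\cdot\id_L)$, i.e.\ $\Phi$ restricts to the multiplication map $k^*\times k^*\to k^*$; hence $\ker\Phi=\{(a,a^{-1}):a\in k^*\}$, exactly the subgroup in the statement.

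Finally, surjectivity of $\Phi$ I would obtain from the five lemma applied to the ladder of central extensions
\[
\begin{array}{ccccccccc}
0 &\to& k^*\times k^* &\to& G(L_1)\times G(L_2) &\to& H(L_1)\times H(L_2) &\to& 0\\
  & & \downarrow       & & \downarrow\,\Phi     & & \|                   & &  \\
0 &\to& k^*            &\to& G(L)                &\to& H(L)                 &\to& 0,
\end{array}
\]
whose left vertical arrow (multiplication) is surjective and whose right vertical arrow is the identity by the first step. Concretely: given $g\in G(L)$ lying over $(P_1,P_2)\in H(L)=H(L_1)\times H(L_2)$, choose any lift $(h_1,h_2)\in G(L_1)\times G(L_2)$ over $(P_1,P_2)$; then $g\cdot\Phi(h_1,h_2)^{-1}=a$ for some $a\in k^*\subset G(L)$, and since $a=\Phi\big((O,a\cdot\id),(O,\id)\big)$ we get $g=\Phi\big((O,a\cdot\id)h_1,\ h_2\big)\in\operatorname{Im}\Phi$. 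Hence $\Phi$ descends to an isomorphism $\big(G(L_1)\times G(L_2)\big)/\{(a,a^{-1})\}\xrightarrow{\sim}G(L)$, as claimed. I expect the only subtle step to be the first one: keeping the pullback-along-translation identities straight, and checking that $p_1^*M_1\otimes p_2^*M_2$ is trivial exactly when each $M_i$ is; everything afterward is bookkeeping with the two group laws.
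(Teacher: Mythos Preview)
Your proof is correct and essentially self-contained: you build the natural map $\Phi$, verify it is a homomorphism, compute its kernel, and deduce surjectivity from the central extension diagram. The paper, by contrast, does not prove this lemma at all---it simply cites Lemma~1 on p.~323 of Mumford's \emph{On the equations defining abelian varieties I}. Your argument is in fact the standard one (and is what underlies Mumford's statement), so there is no meaningful methodological difference; you have just written out what the paper leaves as a reference.
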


\begin{proof}
  This is Lemma 1 on p.~323 of \cite{Mumford}.
\end{proof}
We will write the right-hand group above as $G(L_1) \times_{k^*} G(L_2)$; if we think of the relevant groups as group schemes, then we do in fact get the fiber product.

Observe that if $\delta_1, \delta_2$ are the types of $L_1, L_2$, respectively, then $\delta = (\delta_1, \delta_2)$ is the type of $p_1^*L_1 \otimes p_2^*L_2$ and $G(\delta) = G(\delta_1) \times_{k^*} G(\delta_2).$ It follows that if $\alpha_i$ is a theta structure for the pair $(A_i, L_i)$, then we have a product theta structure $\alpha: G(\delta) \to G(L)$. 

\begin{proposition} \label{prop:product-thetas-segre}
  Let $A_1$ and $A_2$ be abelian varieties, let $L_1$ and $L_2$ be ample invertible sheaves on $A_1$ and $A_2$ of types $\delta_1$ and $\delta_2$, respectively, and let $L = p_1^*L_1 \otimes p_2^*L_2$, an invertible sheaf on $A = A_1 \times A_2$. Let $\alpha_1$ and  $\alpha_2$ be theta structures for $(A_1, L_1)$ and $(A_2, L_2)$, respectively, and let $\alpha$ be the product theta structure for $(A, L)$. For the theta structure $\alpha_i$, let $\phi_i: A_i \to \pro^{N_i}$ be the associated morphism, and similarly define $\phi: A \to \pro^N$. Then, up to possible reordering of coordinates, $\phi = s \circ (\phi_1, \phi_2),$ where $(\phi_1, \phi_2): A_1 \times A_2 \to \pro^{N_1} \times \pro^{N_2}$ and $s$ is the Segre map $\pro^{N_1} \times \pro^{N_2} \to \pro^N$.
\end{proposition}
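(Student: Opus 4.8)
The plan is to deduce everything from the uniqueness‑up‑to‑scalar of the equivariant isomorphism $\beta\colon V(\delta)\to V$ attached to a theta structure (Mumford~\cite[p.~298]{Mumford}), by exhibiting the ``obvious'' tensor‑product map $\beta_1\otimes\beta_2$ as such a $\beta$ for the product theta structure. First I would record three tensor‑product identifications. On the geometric side, the Künneth formula gives $V=\Gamma(A,p_1^*L_1\otimes p_2^*L_2)\cong \Gamma(A_1,L_1)\otimes_k\Gamma(A_2,L_2)=V_1\otimes V_2$, under which a decomposable tensor $s_1\otimes s_2$ is the section $(P_1,P_2)\mapsto s_1(P_1)s_2(P_2)$ in the natural trivialization of $L$ away from the theta divisors; moreover, under the identification $G(L)=G(L_1)\times_{k^*}G(L_2)$ of the lemma on p.~323 of \cite{Mumford}, the $G(L)$‑action on $V$ is the (well‑defined) tensor product of the $G(L_i)$‑actions on the $V_i$. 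On the abstract side, $K_1=K_1^{(1)}\times K_1^{(2)}$ with $K_1^{(i)}$ the group attached to $\delta_i$, so a function on $K_1$ is a sum of products of functions on the factors and $V(\delta)\cong V(\delta_1)\otimes V(\delta_2)$; the delta‑function basis vector $f_i$ at $x_i=(x_j^{(1)},x_k^{(2)})$ corresponds to $f_j^{(1)}\otimes f_k^{(2)}$. Finally, $G(\delta)=G(\delta_1)\times_{k^*}G(\delta_2)$ acts on $V(\delta_1)\otimes V(\delta_2)$ by the tensor product of the standard actions; this is a direct computation from $(a,x,y)(f)(z)=a\braket{z,y}f(x+z)$ together with the splitting of the pairing on $H(\delta)=H(\delta_1)\times H(\delta_2)$.

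Next, with $\alpha=\alpha_1\times_{k^*}\alpha_2\colon G(\delta)\to G(L)$ the product theta structure, I would check that $\beta_1\otimes\beta_2$ intertwines $\alpha$: for $(g_1,g_2)\in G(\delta)$ and a decomposable $f_1\otimes f_2$,
\[
(\beta_1\otimes\beta_2)\big((g_1,g_2)(f_1\otimes f_2)\big)=\beta_1(g_1f_1)\otimes\beta_2(g_2f_2)=\alpha_1(g_1)\beta_1(f_1)\otimes\alpha_2(g_2)\beta_2(f_2)=\alpha(g_1,g_2)\big((\beta_1\otimes\beta_2)(f_1\otimes f_2)\big),
\]
where the first and last equalities use the tensor‑product descriptions of both group actions from the previous step. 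Hence $\beta_1\otimes\beta_2$ is an $\alpha$‑equivariant isomorphism $V(\delta)\to V$, so by Mumford's uniqueness statement it differs from the canonical $\beta$ of the product theta structure by a single global scalar $c\in k^*$.

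It then remains to translate this to the projective maps. We obtain $\theta_i=\beta(f_i)=c^{-1}\,\beta_1(f_j^{(1)})\otimes\beta_2(f_k^{(2)})=c^{-1}\,\theta_j^{(1)}\otimes\theta_k^{(2)}$, with the \emph{same} $c$ for all $i$, where $i\leftrightarrow(j,k)$ under the bijection $K_1\cong K_1^{(1)}\times K_1^{(2)}$. By the Künneth description, $\big(\theta_j^{(1)}\otimes\theta_k^{(2)}\big)(P_1,P_2)=\theta_j^{(1)}(P_1)\,\theta_k^{(2)}(P_2)$, so in homogeneous coordinates
\[
\phi(P_1,P_2)=\big[\,\cdots:\theta_j^{(1)}(P_1)\theta_k^{(2)}(P_2):\cdots\,\big],
\]
which is exactly $s\big(\phi_1(P_1),\phi_2(P_2)\big)$ once the coordinates on $\pro^N$ are ordered so that the index $i$ matches the index $(j,k)$ used by the Segre map (the global factor $c^{-1}$ is invisible projectively). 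This is the asserted equality up to the stated reordering of coordinates.

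The step I expect to be the main obstacle is making the two ``tensor product action'' claims fully rigorous: that the $G(L)$‑action on $\Gamma(A,L)$ is the tensor of the $G(L_i)$‑actions under the fiber‑product identification, and the parallel statement for $G(\delta)$ on $V(\delta)$. Both are bookkeeping inside Mumford's framework — the first implicit around the lemma on p.~323, the second the bilinearity of the level‑$\delta$ formulas — but they are precisely what licenses the appeal to uniqueness up to scalar. Everything else (Künneth, the delta‑basis matching, and the final rewriting as a Segre composition) is routine.
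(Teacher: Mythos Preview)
Your proposal is correct and follows essentially the same approach as the paper's proof: identify $V\cong V_1\otimes V_2$ and $V(\delta)\cong V(\delta_1)\otimes V(\delta_2)$, recognize that the induced $\beta$ for the product theta structure is $\beta_1\otimes\beta_2$ (up to scalar), and read off the Segre factorization from the delta-function basis. The paper simply asserts that ``one sees'' $\beta=(\beta_1,\beta_2)$, whereas you supply the justification by checking equivariance of $\beta_1\otimes\beta_2$ and invoking Mumford's uniqueness-up-to-scalar; this is a welcome elaboration rather than a different route.
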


\begin{proof}
  We first check that the Segre map makes sense; that is, that 
  $$N = (N_1 + 1)(N_2 + 1) - 1.$$ 
  Setting $V_i = \Gamma(A_i, L_i)$ and $V = \Gamma(A, L)$, we have $\dim V_i = N_i + 1$, $\dim V = N + 1$, but also $V = V_1 \otimes V_2$. Observe that similarly, $V(\delta) = V(\delta_1) \otimes V(\delta_2)$. Therefore the dimension counts are compatible.

  By our isomorphisms $G(\delta) \cong G(\delta_1) \times_{k^*} G(\delta_2)$ and $G(L) \cong G(L_1) \times_{k^*} G(L_2)$ and the fact that $\alpha$ is obtained as
  \[
    (\alpha_1, \alpha_2): G(\delta_1) \times_{k^*} G(\delta_2) \longrightarrow G(L_1) \times_{k^*} G(L_2),
  \]
  one sees that the induced isomorphism $\beta: V(\delta) \to V$ can instead be written
  \[
    (\beta_1,\beta_2): V(\delta_1) \otimes V(\delta_2) \longrightarrow V_1 \otimes V_2
  \]
  where the $\beta_i$ are the isomorphisms induced by the $\alpha_i$. Let $f_0, \dots, f_{N_1}$ be the standard basis for $V(\delta_1)$, and $g_0, \dots, g_{N_2}$ the standard basis for $V(\delta_2)$. Then $f_i \otimes g_j$ is a basis for $V(\delta) = V(\delta_1) \otimes V(\delta_2)$, and one sees that it will be a permutation of the standard basis. Let us assume that the standard ordering is given by 
\[
f_0 \otimes g_0, f_1 \otimes g_0, \dots, f_0 \otimes g_1, f_1 \otimes g_1, \dots.
\]
Let $r_i = \beta_1(f_i)$ and $s_j = \beta_2(g_j)$. Then for $P = (P_1,P_2) \in A_1 \times A_2$,
  \begin{align*}
    \phi_1(P_1) &= [r_0(P_1): \cdots :r_{N_1}(P_1)], \\
    \phi_2(P_2) &= [s_0(P_2): \cdots :s_{N_2}(P_2)], \text{ and} \\
    \phi(P) &= [(r_0s_0)(P): (r_1s_0)(P) : \cdots : (r_{N_1}s_{N_2})(P)] \\
           &= [r_0(P_1)s_0(P_2): \cdots : r_{N_1}(P_1)s_{N_2}(P_2)].
  \end{align*}
  Then visibly $s(\phi_1(P_1),\phi_2(P_2)) = \phi(P)$. The claim follows.
\end{proof}

Observe that under the hypotheses of the proposition, if $\phi$ is an embedding, then $\phi(A)$ is simply the Segre embedding applied to the product embedding 
$$\phi_1(A_1) \times \phi_2(A_2) \subset \pro^{N_1} \times \pro^{N_2}.$$

\subsection{Mumford's theta null invariant}
\label{sec:theta-null-invariant-appendix}

Fix $n \in\Z_{>0}$ and mutually isogenous elliptic curves  $E_1,\ldots,E_n$.
Let 
$$A = E_1 \times E_2 \times \cdots \times E_n$$ 
with its product polarization, i.e., the principal polarization with divisor
\begin{equation}
  D = (\{O\} \times E_2 \times \cdots \times E_n) + (E_1 \times \{O\} \times E_3 \times \cdots \times E_n) + \cdots + (E_1 \times \cdots \times \{O\}).\label{eq:D}
\end{equation}
Fix $m \geq 2$ and let $L$ be the invertible sheaf associated to the divisor $mD$, so that $H(L) = A[m]$ and $\delta = (m,m,\dots,m)$.
Loop over every theta structure for $L$ and, for each one, compute the associated theta null. We define $$\Mumf_m(E_1,\dots,E_n)$$ to be the resulting set of theta nulls.

\begin{proposition}\label{prop:bounded-number-theta-nulls-proven}
  For fixed $m$ and $n$, we have 
  $$\#\Mumf_m(E_1,\dots,E_n) \le m^{2n} \cdot \#\mathrm{Sp}_{2n}(\Z/m\Z).$$
\end{proposition}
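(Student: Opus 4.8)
The plan is to bound the number of theta nulls by counting the data that parametrizes them. A theta null of type $\delta = (m,\dots,m)$ on $A = E_1\times\cdots\times E_n$ is determined by a theta structure $\alpha: G(\delta)\to G(L)$, and every theta structure factors through its induced symplectic isomorphism $h: H(\delta)\to H(L) = A[m]$ on the Heisenberg quotients. So $\#\Mumf_m(E_1,\dots,E_n)$ is at most the number of theta structures, which in turn is at most (number of symplectic isomorphisms $h$) times (number of theta structures inducing a fixed $h$).

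\textbf{Step 1: count the symplectic isomorphisms.} The lemma preceding Lemma~\ref{lemma:symplectic-to-theta} says that a theta structure induces a symplectic isomorphism $H(\delta)\to H(L)$ with respect to the Weil pairing. Since $A$ has dimension $n$, both $H(\delta)$ and $H(L) = A[m]$ are free $\Z/m\Z$-modules of rank $2n$ carrying a nondegenerate symplectic form, so the set of symplectic isomorphisms between them is a torsor under $\mathrm{Sp}_{2n}(\Z/m\Z)$; in particular there are exactly $\#\mathrm{Sp}_{2n}(\Z/m\Z)$ of them (one first fixes any single symplectic isomorphism, then composes with automorphisms).

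\textbf{Step 2: count lifts of a fixed $h$.} Here I invoke Lemma~\ref{lemma:symplectic-to-theta} directly, with $g = n$: given a symplectic isomorphism $h: H(\delta)\to H(L)$ and the hypothesis $H(L) = A[m]$ (which holds by our choice of $L = \mathscr{O}(mD)$), there are exactly $m^{2n}$ theta structures inducing $h$.

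\textbf{Step 3: assemble.} Multiplying the two counts, the number of theta structures is exactly $m^{2n}\cdot\#\mathrm{Sp}_{2n}(\Z/m\Z)$, and since each theta null arises from at least one theta structure (indeed the map (theta structures) $\to$ (theta nulls) need not be injective), we get the asserted inequality $\#\Mumf_m(E_1,\dots,E_n)\le m^{2n}\cdot\#\mathrm{Sp}_{2n}(\Z/m\Z)$.

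\textbf{The main obstacle} is really just bookkeeping rather than mathematics: one must confirm that the hypotheses of Lemma~\ref{lemma:symplectic-to-theta} are genuinely met in this setting — in particular that $H(L) = A[m]$ for $L = \mathscr{O}(mD)$ with $D$ the product polarization divisor, which follows because $D$ is a principal polarization so $mD$ has type $(m,\dots,m)$ — and that "symplectic isomorphism" is used consistently (same symplectic form on both sides), so that the torsor count in Step 1 is valid. Neither point requires new ideas; everything needed is already established in Appendix~\ref{sec:algebr-theta-funct}.
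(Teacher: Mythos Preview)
Your proposal is correct and follows essentially the same argument as the paper: count theta structures by first counting symplectic isomorphisms $H(\delta)\to H(L)$ (there are $\#\mathrm{Sp}_{2n}(\Z/m\Z)$ of them) and then applying Lemma~\ref{lemma:symplectic-to-theta} to get $m^{2n}$ lifts over each, yielding the bound. The extra care you take in verifying the torsor count and the hypothesis $H(L)=A[m]$ is sound but not strictly needed, as the paper's proof simply asserts these facts.
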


\begin{proof}
  To choose a theta structure, we first choose a symplectic isomorphism $h:H(\delta) \to H(L)$. The number of such isomorphisms is the size of
$\mathrm{Sp}_{2n}(\Z/m\Z),$ the group of $2n \times 2n$ symplectic matrices with entries in $\Z/m\Z$. For each symplectic isomorphism $h$, by Lemma \ref{lemma:symplectic-to-theta} there are $m^{2n}$ theta structures lying above it. The claim now follows.
\end{proof}

\begin{proposition}\label{prop:failure-theta-nulls}
  Let $E_1$ and $E_2$ be isogenous ordinary elliptic curves over $\F_q$ with complex multiplication by $\Q(\sqrt{-d})$ with $d$ square-free. Let $h(d)$ be the class number of $\Q(\sqrt{-d})$. Suppose $4 \mid m\in \Z_{>0}$ and $h(d) > m^4 \# \mathrm{Sp}_{2n}(\Z/m\Z)$.
Then there exist $E_3$ and $E_4$ isogenous to $E_1$ such that $E_1 \times E_2 \cong E_3 \times E_4$ but $\Mumf_m(E_1,E_2) \neq \Mumf_m(E_3,E_4).$
\end{proposition}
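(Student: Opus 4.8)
The plan is a counting/pigeonhole argument. First I would describe the relevant family of pairs. Write $\scrO$ for the (maximal) endomorphism ring common to all curves in the isogeny class of $E_1$, and $E_i \cong \mathfrak{a}_i \ast E$ with $\mathfrak{a}_i \in \Cl(\scrO)$. By the isomorphism criterion~\eqref{eq:prodabvarisomcond-lb} (established via the classification in Appendix~\ref{sec:classif}), the ordered pairs $(E_3, E_4)$ of elliptic curves with $E_3 \times E_4 \cong E_1 \times E_2$ over $\F_q$ are exactly those of the form $E_3 = \mathfrak{b} \ast E_1$, $E_4 = \mathfrak{b}^{-1} \ast E_2$ for $\mathfrak{b} \in \Cl(\scrO)$; distinct classes $\mathfrak{b}$ give non-isomorphic pairs, as $\ast$ is free. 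Hence there are exactly $\#\Cl(\scrO) = h(d)$ such pairs, and each such $E_3$ is isogenous to $E_1$ while each $E_4$ is isogenous to $E_2$, hence to $E_1$. I would then argue by contradiction: suppose $\Mumf_m(E_3, E_4) = \Mumf_m(E_1, E_2) =: \mathcal{M}$ for all of these pairs. By \cref{prop:bounded-number-theta-nulls-proven} with $n = 2$, $\#\mathcal{M} \le m^4\, \#\mathrm{Sp}_4(\Z/m\Z)$.

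The core step is to attach to each pair $(E_3,E_4)$ a nonempty subset $P_{(E_3,E_4)} \subseteq \Mumf_m(E_3, E_4)$ of \emph{product theta nulls} so that the sets $P_{(E_3,E_4)}$ are pairwise disjoint. Given level-$(m)$ theta structures $\alpha_3, \alpha_4$ for $E_3, E_4$ with their principal polarizations (which exist), the discussion preceding \cref{prop:product-thetas-segre} yields a product theta structure $\alpha_3 \times \alpha_4$ of level $(m,m)$ for $(E_3 \times E_4, mD)$, with $D$ the product polarization, whose theta null lies in $\Mumf_m(E_3,E_4)$; let $P_{(E_3,E_4)}$ collect all theta nulls obtained this way. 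By \cref{prop:product-thetas-segre} evaluated at the origin, each such theta null is, up to the fixed reordering of the coordinates of $\pro^N$ determined by the chosen ordering of $K_1 = (\Z/m\Z)^2$, the Segre image of the pair consisting of the theta null of $(E_3, \alpha_3)$ and the theta null of $(E_4, \alpha_4)$.

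It then remains to show that a point of $\pro^N$ is a product theta null of at most one pair $(E_3, E_4)$, up to isomorphism. From such a point one first undoes the fixed coordinate reordering and then, using injectivity of the Segre embedding $\pro^{m-1} \times \pro^{m-1} \to \pro^N$, recovers via its two coordinate slices the \emph{ordered} pair consisting of the theta null of $(E_3, \alpha_3)$ and the theta null of $(E_4, \alpha_4)$. Since $4 \mid m$, \cref{prop:level-4-embedding} shows that each of these level-$(m)$ theta nulls determines, through its Riemann equations, the embedded curve $\phi_{\alpha_i}(E_i) \subset \pro^{m-1}$, hence the isomorphism class of $E_i$; so $(E_3, E_4)$ is recovered up to isomorphism, which gives the disjointness. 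Under the contradiction hypothesis the $h(d)$ nonempty pairwise disjoint sets $P_{(E_3,E_4)}$ all lie in $\mathcal{M}$, so $h(d) \le \#\mathcal{M} \le m^4\, \#\mathrm{Sp}_4(\Z/m\Z) < h(d)$, a contradiction. Hence some pair $(E_3, E_4)$ with $E_3, E_4$ isogenous to $E_1$ and $E_1 \times E_2 \cong E_3 \times E_4$ satisfies $\Mumf_m(E_3, E_4) \ne \Mumf_m(E_1, E_2)$, as desired.

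The step I expect to be the main obstacle is the disjointness in the third paragraph — that a product theta null remembers the \emph{ordered} pair of factor curves up to isomorphism. This combines \cref{prop:product-thetas-segre} with \cref{prop:level-4-embedding}, but demands careful bookkeeping of the coordinate ordering on $\pro^N$, precisely so that the two orderings $(E_3, E_4)$ and $(E_4, E_3)$ produce distinguishable theta nulls whenever $E_3 \not\cong E_4$; it is this that yields the sharp hypothesis $h(d) > m^4\, \#\mathrm{Sp}_4(\Z/m\Z)$ rather than a bound weaker by a factor of two. A minor point to settle is that $\Mumf_m$ is computed over $\overline{\F_q}$, yet distinct curves in the (ordinary, nonzero-trace-of-Frobenius) isogeny class of $E_1$ stay pairwise non-isomorphic over $\overline{\F_q}$, so none of the pairs above get conflated by that base change.
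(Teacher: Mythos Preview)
Your proposal is correct and follows essentially the same approach as the paper: a pigeonhole argument comparing the $h(d)$ pairs $(\mathfrak{b}\ast E_1,\mathfrak{b}^{-1}\ast E_2)$ against the at most $m^4\,\#\mathrm{Sp}_4(\Z/m\Z)$ theta nulls, with the key injectivity step established via \cref{prop:product-thetas-segre} and \cref{prop:level-4-embedding}. The only cosmetic difference is that the paper phrases the injectivity as a map $F\colon P\to M$ picking one product theta null per pair and recovers the factors by first reconstructing the full embedded surface $\phi(A)$ from $\phi(O)$ and then reading off the Segre factors, whereas you invert the Segre map directly at the level of theta-null points; both routes are equivalent.
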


\begin{proof}
  Let $A = E_1 \times E_2$. Let $P$ be the set of pairs of elliptic curves $(E_3, E_4)$ isogenous to $E_1$ for which $A \cong E_3 \times E_4$. Suppose for the sake of contradiction that $M:=\Mumf_m(E_1,E_2) = \Mumf_m(E_3,E_4)$ for all $(E_3, E_4) \in P$. Construct a map $F:P \to M$ as follows. Given $(E_3, E_4) \in P$, define the invertible sheaves $L_3 = L(mO)$ on $E_3$, $L_4 = L(mO)$ on $E_4$, and $L = p_3^*L_3 \otimes p_4^*L_4$ on $E_3 \times E_4$, where the $p_i$ are the obvious projections. Let $\delta=(m,m)$. Then our theta nulls for $E_3 \times E_4$ are computed using theta structures of the form $\alpha: G(\delta) \to G(L)$. Choose a product theta structure $\alpha_0$. Then define $F(E_3,E_4)$ to be the theta null associated to $\alpha_0$.

  We claim that $F$ is injective. For given a theta null in the image of $F$, it comes from a product theta structure for some $(E', E'') \in P$. Since $4 \mid m$, by the second part of Proposition \ref{prop:level-4-embedding} the associated embedding $\phi(A)$ is determined by the theta null $\phi(O)$. By Proposition \ref{prop:product-thetas-segre}, 
  we have $\phi(A) = s(\phi_1(E'), \phi_2(E''))$, where $s$ is the Segre map. As $\phi$ is an embedding (by Proposition \ref{prop:level-4-embedding}), the $\phi_i$ must also be embeddings. Therefore we can recover the pair $(E', E'')$ from the theta null. This gives a map $\mathrm{im}(F) \to P$ which is a left inverse to $F$, which shows that $F$ is injective.

Let $\scrO = \End(E_1)$. The map $\Cl(\scrO) \to P$ given by $[\mathfrak{a}] \mapsto (\mathfrak{a} \ast E_1, \mathfrak{a}^{-1} \ast E_2)$ is injective, so $\# P \geq h(d)$. By Proposition \ref{prop:bounded-number-theta-nulls-proven} and our assumption on $m$ we have 
$$\# M \leq m^4\#\mathrm{Sp}_{4}(\Z/m\Z) < h(d) \le \#P,$$ 
contradicting the injectivity of $F$. The claim follows.
\end{proof}

As stated in \Cref{sec:rand-sampl-isog}, assuming the Generalized Riemann Hypothesis, the class number $h(d)$ is roughly of size $\sqrt{q}$  as $q \to \infty$. Thus the inequality $h(d) > m^4 \#\mathrm{Sp}_{2n}(\Z/m\Z)$
in the statement of Proposition \ref{prop:failure-theta-nulls} holds for typical use-cases.


\begin{thebibliography}{GGH{\etalchar{+}}13b}

\bibitem[Bel04]{belabas}
Karim Belabas.
\newblock Topics in computational algebraic number theory.
\newblock {\em J. Th\'eor. Nombres Bordeaux}, 16(1):19--63, 2004.

\bibitem[BGI14]{PKC:BoyGolIva14}
Elette Boyle, Shafi Goldwasser, and Ioana Ivan.
\newblock Functional signatures and pseudorandom functions.
\newblock In Hugo Krawczyk, editor, {\em PKC~2014}, volume 8383 of {\em
  {LNCS}}, pages 501--519. Springer, Heidelberg, March 2014.

\bibitem[BLR{\etalchar{+}}15]{EC:BLRSZZ15}
Dan Boneh, Kevin Lewi, Mariana Raykova, Amit Sahai, Mark Zhandry, and Joe
  Zimmerman.
\newblock Semantically secure order-revealing encryption: Multi-input
  functional encryption without obfuscation.
\newblock In Elisabeth Oswald and Marc Fischlin, editors, {\em EUROCRYPT~2015,
  Part~II}, volume 9057 of {\em {LNCS}}, pages 563--594. Springer, Heidelberg,
  April 2015.

\bibitem[BLS01]{AC:BonLynSha01}
Dan Boneh, Ben Lynn, and Hovav Shacham.
\newblock Short signatures from the {Weil} pairing.
\newblock In Colin Boyd, editor, {\em ASIACRYPT~2001}, volume 2248 of {\em
  {LNCS}}, pages 514--532. Springer, Heidelberg, December 2001.

\bibitem[BR96]{EC:BelRog96}
Mihir Bellare and Phillip Rogaway.
\newblock The exact security of digital signatures: How to sign with {RSA} and
  {Rabin}.
\newblock In Ueli~M. Maurer, editor, {\em EUROCRYPT'96}, volume 1070 of {\em
  {LNCS}}, pages 399--416. Springer, Heidelberg, May 1996.

\bibitem[BS03]{boneh2003applications}
Dan Boneh and Alice Silverberg.
\newblock Applications of multilinear forms to cryptography.
\newblock {\em Contemporary Mathematics}, 324(1):71--90, 2003.

\bibitem[BW13]{AC:BonWat13}
Dan Boneh and Brent Waters.
\newblock Constrained pseudorandom functions and their applications.
\newblock In Kazue Sako and Palash Sarkar, editors, {\em ASIACRYPT~2013,
  Part~II}, volume 8270 of {\em {LNCS}}, pages 280--300. Springer, Heidelberg,
  December 2013.

\bibitem[BZ17]{boneh2017multiparty}
Dan Boneh and Mark Zhandry.
\newblock Multiparty key exchange, efficient traitor tracing, and more from
  indistinguishability obfuscation.
\newblock {\em Algorithmica}, 79(4):1233--1285, 2017.
\newblock Extended abstract in Crypto 2014.

\bibitem[CF96]{Prolegomena}
John W.~S. Cassels and E.~Victor Flynn.
\newblock {\em Prolegomena to a middlebrow arithmetic of curves of genus
  {$2$}}, volume 230 of {\em London Mathematical Society Lecture Note Series}.
\newblock Cambridge University Press, Cambridge, 1996.

\bibitem[CLG09]{JC:ChaLauGor09}
Denis~Xavier Charles, Kristin~E. Lauter, and Eyal~Z. Goren.
\newblock Cryptographic hash functions from expander graphs.
\newblock {\em Journal of Cryptology}, 22(1):93--113, January 2009.

\bibitem[CLM{\etalchar{+}}18]{CSIDH}
Wouter Castryck, Tanja Lange, Chloe Martindale, Lorenz Panny, and Joost Renes.
\newblock {CSIDH}: An efficient post-quantum commutative group action.
\newblock Cryptology ePrint Archive, Report 2018/383, 2018.
\newblock \url{https://eprint.iacr.org/2018/383}.

\bibitem[Coh93]{Cohen}
Henri Cohen.
\newblock {\em A course in computational algebraic number theory}, volume 138
  of {\em Graduate Texts in Mathematics}.
\newblock Springer-Verlag, Berlin, 1993.

\bibitem[Cou06]{EPRINT:Couveignes06}
Jean-Marc Couveignes.
\newblock Hard homogeneous spaces.
\newblock Cryptology ePrint Archive, Report 2006/291, 2006.
\newblock \url{http://eprint.iacr.org/2006/291}.

\bibitem[Del69]{Deligne69}
Pierre Deligne.
\newblock Vari{\'e}t{\'e}s ab{\'e}liennes ordinaires sur un corps fini.
\newblock {\em Invent. Math.}, 8:238--243, 1969.

\bibitem[Deu41]{Deuring}
Max Deuring.
\newblock Die {T}ypen der {M}ultiplikatorenringe elliptischer
  {F}unktionenk\"orper.
\newblock {\em Abh. Math. Sem. Hansischen Univ.}, 14:197--272, 1941.

\bibitem[DT02]{Duke-Toth}
W.~Duke and \'A. T\'oth.
\newblock The splitting of primes in division fields of elliptic curves.
\newblock {\em Experiment. Math.}, 11(4):555--565 (2003), 2002.

\bibitem[DY05]{PKC:DodYam05}
Yevgeniy Dodis and Aleksandr Yampolskiy.
\newblock A verifiable random function with short proofs and keys.
\newblock In Serge Vaudenay, editor, {\em PKC~2005}, volume 3386 of {\em
  {LNCS}}, pages 416--431. Springer, Heidelberg, January 2005.

\bibitem[EHL{\etalchar{+}}18]{EC:EHLMP18}
Kirsten Eisentr{\"a}ger, Sean Hallgren, Kristin~E. Lauter, Travis Morrison, and
  Christophe Petit.
\newblock Supersingular isogeny graphs and endomorphism rings: Reductions and
  solutions.
\newblock In Jesper~Buus Nielsen and Vincent Rijmen, editors, {\em
  EUROCRYPT~2018, Part~III}, volume 10822 of {\em {LNCS}}, pages 329--368.
  Springer, Heidelberg, April~/~May 2018.

\bibitem[FHKP13]{PKC:FHKP13}
Eduarda S.~V. Freire, Dennis Hofheinz, Eike Kiltz, and Kenneth~G. Paterson.
\newblock Non-interactive key exchange.
\newblock In Kaoru Kurosawa and Goichiro Hanaoka, editors, {\em PKC~2013},
  volume 7778 of {\em {LNCS}}, pages 254--271. Springer, Heidelberg,
  February~/~March 2013.

\bibitem[FJP14]{FJP14}
Luca~De Feo, David Jao, and J{\'{e}}r{\^{o}}me Pl{\^{u}}t.
\newblock Towards quantum-resistant cryptosystems from supersingular elliptic
  curve isogenies.
\newblock {\em J. Mathematical Cryptology}, 8(3):209--247, 2014.

\bibitem[FKS18]{cryptoeprint:2018:485}
Luca~De Feo, Jean Kieffer, and Benjamin Smith.
\newblock Towards practical key exchange from ordinary isogeny graphs.
\newblock Cryptology ePrint Archive, Report 2018/485, 2018.
\newblock \url{https://eprint.iacr.org/2018/485}.

\bibitem[Fuc14]{SCN:Fuchsbauer14}
Georg Fuchsbauer.
\newblock Constrained verifiable random functions.
\newblock In Michel Abdalla and Roberto~De Prisco, editors, {\em SCN 14},
  volume 8642 of {\em {LNCS}}, pages 95--114. Springer, Heidelberg, September
  2014.

\bibitem[Gal18]{Galbraith}
Steven~D. Galbraith.
\newblock Authenticated key exchange for {SIDH}.
\newblock Cryptology ePrint Archive, Report 2018/266, 2018.
\newblock \url{https://eprint.iacr.org/2018/266}.

\bibitem[GGH13a]{EC:GarGenHal13}
Sanjam Garg, Craig Gentry, and Shai Halevi.
\newblock Candidate multilinear maps from ideal lattices.
\newblock In Thomas Johansson and Phong~Q. Nguyen, editors, {\em
  EUROCRYPT~2013}, volume 7881 of {\em {LNCS}}, pages 1--17. Springer,
  Heidelberg, May 2013.

\bibitem[GGH{\etalchar{+}}13b]{FOCS:GGHRSW13}
Sanjam Garg, Craig Gentry, Shai Halevi, Mariana Raykova, Amit Sahai, and Brent
  Waters.
\newblock Candidate indistinguishability obfuscation and functional encryption
  for all circuits.
\newblock In {\em 54th FOCS}, pages 40--49. {IEEE} Computer Society Press,
  October 2013.

\bibitem[GGSW13]{STOC:GGSW13}
Sanjam Garg, Craig Gentry, Amit Sahai, and Brent Waters.
\newblock Witness encryption and its applications.
\newblock In Dan Boneh, Tim Roughgarden, and Joan Feigenbaum, editors, {\em
  45th ACM STOC}, pages 467--476. {ACM} Press, June 2013.

\bibitem[GKP{\etalchar{+}}13]{C:GKPVZ13}
Shafi Goldwasser, Yael~Tauman Kalai, Raluca~A. Popa, Vinod Vaikuntanathan, and
  Nickolai Zeldovich.
\newblock How to run {T}uring machines on encrypted data.
\newblock In Ran Canetti and Juan~A. Garay, editors, {\em CRYPTO~2013,
  Part~II}, volume 8043 of {\em {LNCS}}, pages 536--553. Springer, Heidelberg,
  August 2013.

\bibitem[Hay65]{Hayashida}
Tsuyoshi Hayashida.
\newblock A class number associated with a product of two elliptic curves.
\newblock {\em Natur. Sci. Rep. Ochanomizu Univ.}, 16:9--19, 1965.

\bibitem[HN65]{HayashidaNishi}
Tsuyoshi Hayashida and Mieo Nishi.
\newblock Existence of curves of genus two on a product of two elliptic curves.
\newblock {\em J. Math. Soc. Japan}, 17:1--16, 1965.

\bibitem[Igu60]{Igusa60}
Jun-ichi Igusa.
\newblock Arithmetic variety of moduli for genus two.
\newblock {\em Ann. of Math. (2)}, 72:612--649, 1960.

\bibitem[JKP{\etalchar{+}}18]{JKPRST18}
Bruce~W. Jordan, Allan~G. Keeton, Bjorn Poonen, Eric~M. Rains, Nicholas
  Shepherd-Barron, and John~T. Tate.
\newblock Abelian varieties isogenous to a power of an elliptic curve.
\newblock {\em Compositio Math.}, 154(5):934--959, 2018.
\newblock \url{https://arxiv.org/abs/1602.06237}.

\bibitem[JMV09]{JaoMilVen09}
David Jao, Stephen~D. Miller, and Ramarathnam Venkatesan.
\newblock Expander graphs based on {GRH} with an application to elliptic curve
  cryptography.
\newblock {\em J. Number Theory}, 129(6):1491--1504, 2009.
\newblock \url{https://arxiv.org/abs/0811.0647}.

\bibitem[Jou04]{JC:Joux04}
Antoine Joux.
\newblock A one round protocol for tripartite {Diffie}-{Hellman}.
\newblock {\em Journal of Cryptology}, 17(4):263--276, September 2004.

\bibitem[Kan11]{Kani11}
Ernst Kani.
\newblock Products of {CM} elliptic curves.
\newblock {\em Collectanea Math.}, 62(3):297--339, 2011.
\newblock \url{http://www.mast.queensu.ca/~kani/papers/CMprod3.pdf}.

\bibitem[KPTZ13]{CCS:KPTZ13}
Aggelos Kiayias, Stavros Papadopoulos, Nikos Triandopoulos, and Thomas
  Zacharias.
\newblock Delegatable pseudorandom functions and applications.
\newblock In Ahmad-Reza Sadeghi, Virgil~D. Gligor, and Moti Yung, editors, {\em
  ACM CCS 13}, pages 669--684. {ACM} Press, November 2013.

\bibitem[Lan06]{Lange}
Herbert Lange.
\newblock Principal polarizations on products of elliptic curves.
\newblock In {\em The geometry of {R}iemann surfaces and abelian varieties},
  volume 397 of {\em Contemp. Math.}, pages 153--162. Amer. Math. Soc.,
  Providence, RI, 2006.

\bibitem[Lau02]{Lauter02}
Kristin Lauter.
\newblock The maximum or minimum number of rational points on genus three
  curves over finite fields.
\newblock {\em Compositio Math.}, 134(1):87--111, 2002.
\newblock \url{https://arxiv.org/abs/math/0104086}.

\bibitem[Lit28]{Littlewood28}
John~E. Littlewood.
\newblock On the class number of the corpus {$P(\sqrt{-k})$}.
\newblock {\em Proc. London Math. Soc.}, 27(1):358--372, 1928.
\newblock \url{https://doi.org/10.1112/plms/s2-27.1.358}.

\bibitem[Lys02]{C:Lysyanskaya02}
Anna Lysyanskaya.
\newblock Unique signatures and verifiable random functions from the {DH-DDH}
  separation.
\newblock In Moti Yung, editor, {\em CRYPTO~2002}, volume 2442 of {\em {LNCS}},
  pages 597--612. Springer, Heidelberg, August 2002.

\bibitem[MRV99]{FOCS:MicRabVad99}
Silvio Micali, Michael~O. Rabin, and Salil~P. Vadhan.
\newblock Verifiable random functions.
\newblock In {\em 40th FOCS}, pages 120--130. {IEEE} Computer Society Press,
  October 1999.

\bibitem[Mum66]{Mumford}
David Mumford.
\newblock On the equations defining abelian varieties. {I}.
\newblock {\em Invent. Math.}, 1:287--354, 1966.

\bibitem[Rao14]{EPRINT:Rao14}
Vanishree Rao.
\newblock Adaptive multiparty non-interactive key exchange without setup in the
  standard model.
\newblock Cryptology ePrint Archive, Report 2014/910, 2014.
\newblock \url{http://eprint.iacr.org/2014/910}.

\bibitem[RS06]{EPRINT:RosSto06}
Alexander Rostovtsev and Anton Stolbunov.
\newblock Public-key cryptosystem based on isogenies.
\newblock Cryptology ePrint Archive, Report 2006/145, 2006.
\newblock \url{http://eprint.iacr.org/2006/145}.

\bibitem[Ser02]{Serre02}
Jean-Pierre Serre.
\newblock Modules hermitiens et courbes alg{\'e}briques.
\newblock Appendix to {\cite{Lauter02}}, 2002.

\bibitem[Shi77]{Shioda77}
Tetsuji Shioda.
\newblock Some remarks on abelian varieties.
\newblock {\em J. Fac. Sci. Univ. Tokyo, Sec. 1A}, 24(1):11--21, 1977.
\newblock
  \url{http://repository-old.dl.itc.u-tokyo.ac.jp/dspace/handle/2261/6164}.

\bibitem[Shi78]{Shioda78}
Tetsuji Shioda.
\newblock Supersingular {K3} surfaces.
\newblock In Knud L{\o}nsted, editor, {\em Algebraic Geometry}, volume 732 of
  {\em Lecture Notes in Mathematics}, pages 564--591. Springer, 1978.

\bibitem[Sil94]{SilvermanAT}
Joseph~H. Silverman.
\newblock {\em Advanced topics in the arithmetic of elliptic curves}, volume
  151 of {\em Graduate Texts in Mathematics}.
\newblock Springer-Verlag, New York, 1994.

\bibitem[Ste11]{Steinitz}
Ernst Steinitz.
\newblock Rechteckige {S}ysteme und {M}oduln in algebraischen {Z}ahlk\"oppern.
  {I}.
\newblock {\em Math. Ann.}, 71(3):328--354, 1911.

\bibitem[XYY16]{XYY16}
Jiangwei Xue, Tse-Chung Yang, and Chia-Fu Yu.
\newblock On superspecial abelian surfaces over finite fields.
\newblock {\em Documenta Mathematica}, 21:1607--1643, 2016.

\bibitem[Zha16]{TCC:Zhandry16}
Mark Zhandry.
\newblock How to avoid obfuscation using witness {PRFs}.
\newblock In Eyal Kushilevitz and Tal Malkin, editors, {\em TCC~2016-A,
  Part~II}, volume 9563 of {\em {LNCS}}, pages 421--448. Springer, Heidelberg,
  January 2016.

\end{thebibliography}
\end{document}